\newtheorem{theorem}{Theorem}
\newtheorem{lemma}{Lemma}
\newtheorem{proposition}{Proposition}
{ \theoremstyle{definition}
\newtheorem{definition}{Definition}
\newtheorem{example}{Example}
\newtheorem{remark}{Remark}
}
\newcommand{\todo}[1][\null]{\ensuremath{\clubsuit}}
\newcommand{\R}{{\mathbb R}}
\newcommand{\Z}{{\mathbb Z}}
\newcommand{\N}{{\mathbb N}}
\newcommand{\C}{{\mathbb C}}
\newcommand{\g}{\mathfrak{g}}
\newcommand{\h}{\mathfrak{h}}
\newcommand{\id}{\mathrm{id}}
\newcommand{\e}{\mathrm{e}}
\newcommand{\RR}{\mathscr{R}}
\renewcommand{\SS}{\mathscr{S}}
\newcommand{\TT}{\mathscr{T}}
\newcommand{\Inn}{\mathop{\rm Inn}\nolimits}
\newcommand{\Aut}{\mathop{\rm Aut}\nolimits}
\newcommand{\Der}{\mathop{\rm Der}\nolimits}
\newcommand{\Vect}{\mathop{\rm Vect}\nolimits}
\newcommand{\rank}{\mathop{\rm rank}\nolimits}
\newcommand{\Dom}{\mathop{\rm Dom}\nolimits}
\newcommand{\Ad}{\mathop{\rm Ad}\nolimits}
\newcommand{\ad}{\mathop{\rm ad}\nolimits}
\newcommand{\Span}{\mathop{\rm span}}
\newcommand{\Gr}{\mathop{\rm Gr}\nolimits}
\newcommand{\GL}{\mathop{\rm GL}\nolimits}
\newcommand{\sgn}{\mathop{\rm sgn}\nolimits}
\newcommand{\rc}{\mathbin{\backslash}}
\newcommand{\D}{\partial}
\begin{document}

\begin{center}
\LARGE \bf
On classification of Lie algebra realizations
\end{center}

\begin{center}
Daniel Gromada$^\dag$,
%Maryna Nesterenko$^{\ddag}$,
Severin Po\v sta$^\S$
\end{center}
\noindent $^\dag$~Department of Physics, Faculty of Nuclear Sciences and Physical Engineering,\\$\phantom{^\dag}$~Czech Technical University in Prague, B\v rehov\'a 7, CZ-115 19 Prague, Czech Republic\\
$\phantom{^\dag}$~E-mail: daniel.gromada@fjfi.cvut.cz

\noindent $^\S$~Department of Mathematics, Faculty of Nuclear Sciences and Physical Engineering,\\$\phantom{^\dag}$~Czech Technical University in Prague, Trojanova 13, CZ-120 00 Prague, Czech Republic\\
$\phantom{^\dag}$~E-mail: severin.posta@fjfi.cvut.cz

%\noindent $^\ddag$~Institute of Mathematics of NAS of Ukraine, 3
%Tereshchenkivska Str., Kyiv-4, 01601 Ukraine\\
%$\phantom{^\dag}$~E-mail: maryna@imath.kiev.ua
\date{\today}

\begin{abstract}
We study realizations of Lie algebras by vector fields. A correspondence between classification of transitive local realizations and classification of subalgebras is generalized to the case of regular local realizations. A reasonable classification problem for general realizations is rigorously formulated and an algorithm for construction of such classification is presented.
\end{abstract}

Key words: realization, vector field, Lie algebra, classification

\section{Introduction}

There are several main classification problems in the theory of Lie algebras.
In particular description of Lie algebra representations by vector fields give raise to two classification problems. 
The first one was first considered by S.~Lie and is that of classifying, up to local diffeomorphisms, 
the finite-dimensional Lie algebras of vector fields defined on an open subset of the finite-dimensional (usually dimension is rather low 1, 2 or 3) Euclidean space, see, e.g.~\cite{lie1883} and \cite{Olver1}.
The second classification problem starts from the fixed structure of Lie algebra (it's commutation relations) and 
classify all it's inequivalent \emph{realizations}, i.e. representations by finite-dimensional vector fields.
There is a number of publications devoted to construction or classification of certain types of realizations or realizations of fixed Lie algebras.
 One of the most general result was obtained by Popovych et al. in \cite{Popovych2003} (see also references therein), where local realizations of all Lie algebras of dimension less or equal to four are constructed, although the actual classification problem was not specified in all details.

Investigation of realizations is motivated by a wide range of applications in the general theory of differential equations,
integration of differential equations and their systems~\cite{Olver0, Ovsyannikov},
in group classification of ODEs and PDEs~\cite{basarab-horwath&lahno&zhdanov2001},
in classification of gravity fields of a general form with respect to motion groups~\cite{petrov1966},
in geometric control theory and in the theory of systems
with superposition principles~\cite{Carinena,Shnider_Winternitz1984}.
Realizations are also applicable in the difference schemes for
numerical solutions of differential equations~\cite{BourliouxCyr-GagnonWinternitz}.
Description of realizations is the first step for solving the Levine's problem~\cite{Levine}
on the second order time-independent Hamiltonian operators which lie in the universal enveloping algebra of
a finite-dimensional Lie algebra of the first-order differential operators.
The Levine's problem was posed in molecular dynamics.
In such a way, realizations are relevant in the theory of quasi-exactly solvable problems of quantum mechanics through
the so-called algebraic approach to scattering theory and molecular dynamics and the list of possible applications of realizations of Lie algebras
is not exhausted by the above-mentioned subjects.

Usually, realizations are considered only locally due to the general locality of Lie approach. 
The research is often focused only on certain types of realizations such as transitive or regular since the classification problem is much more simple in those cases.

Theoretical results on classification of \emph{transitive} local realizations, together with powerful methods of explicit computation are already available in the literature\cite{Guillemin1964,Blattner1969,shirokov2013,draisma2012}.
In particular, it can be shown that classification of all transitive realizations is equivalent to classification of subalgebras. These results are summarized in Section \ref{sec.trans}.
The key correspondence between realizations and subalgebras was generalized to the case of regular realizations in Section \ref{sec.nontrans}.
In this section the main Theorem \ref{T.reg} is equipped by a simple example illustrating the practical computation.

%but in this paper we pay attention to the general case 
%and we do concentrate not on the construction of realizations of given Lie algebras but on 
%theoretical aspects of the classification problem. 
%Note, that it is actually not so clear how to choose the classification problem for realizations of a fixed Lie algebra to obtain reasonable results.
%After some preliminary definitions in sections \ref{sec.real} and \ref{sec.equiv}, we formulate such classification problem in Section \ref{sec.problem} and illustrate possible issues.

When dealing with general realizations it is actually not so clear how to choose the classification problem to get reasonable results. As we show in Section \ref{sec.problem} one of possible classification problems could be to find a system of local realizations such that every realization is at each point of a dense subset of its domain locally equivalent to some realization of this system. Such a system will be called a \emph{complete system of local realizations} (see Definition \ref{D.complete}) and
we show that such a system can be chosen to consist of \emph{regular} local realizations.

Finally, in Section \ref{sec.complete}, we formulate an algorithm that solves the established classification problem (that is, to find the complete system of local realizations) and Theorem \ref{T.main} summarizing this algorithm is followed by an example illustrating this procedure.

\section{Realizations}
\label{sec.real}
Let $\g$ be a real $n$-dimensional Lie algebra with a
structure constants tensor $C_{ij}^k$, $i,j,k=1,\dots,n$ and $M$ an $m$-dimensional smooth manifold. 
In this paper we study realizations of Lie algebras by vector fields. By a \emph{vector field} we mean an element of derivation of a Lie algebra of smooth functions (that is, differentiable infinitely many times) on some manifold $M$. The Lie algebra of vector fields will be denoted $\Vect M$.

\begin{definition}
A \emph{realization} of $\g$ on the manifold $M$ is a homomorphism $R\colon \g\to\Vect M$. The realization is called \emph{faithful} if it is injective.
\end{definition}

Another approach is to consider a derivation of the algebra of formal power series over a field $F$ of characteristic zero $\Der F[[x]]$ instead of vector fields (see for example \cite{draisma2012}). This definition is more general regarding the arbitrary field $F$, on the other hand it corresponds to local analytic realizations only.

By restricting the realizing vector fields on an open subset $U$ of $M$ we get a \emph{restriction} of a realization on $U$, which we will denote $R|_U$. The manifold, where the realizing vector fields are defined is called the \emph{domain} of the realization and denoted $\Dom R$.

The realizations are often considered only locally. This means that we specify a point in the manifold $M$ and consider the realizing vector fields only in a small neighborhood of this point. This is equivalent to considering the realizations on a neighborhood of zero at $\R^m$. Formally, we can define a local realization as an equivalence class of realizations that coincide in some neighborhood of a given point.

\begin{definition}
\label{D.local}
Let a point $p\in M$ and let $U_1$ and $U_2$ be neighborhoods of $p$. Then realizations $R_1$ and $R_2$ of $\g$ defined on $U_1$ and $U_2$, respectively, \emph{locally coincide} at $p$ if there is a neighborhood $V\subset U_1\cap U_2$ of $p$ such that $R_1|_V=R_2|_V$. The classes of locally coincident realizations at a specified point $p$ are called {\em local realizations at $p$}. A local realization is \emph{faithful} if every its representative is faithful.
\end{definition}

We will usually not strictly distinguish between local realizations and their representatives. For a given global realization $R$ defined on a manifold $M$ we denote $R|_p$ the corresponding local realization at $p\in M$. For a given local realization $R$ at $p\in M$ and $U$ a neighborhood of $p$ we denote $R|_U$ the corresponding representative defined on $U$.

The fundamental result of the Lie theory is that local transformations are completely defined by vector fields representing the infinitesimal transformations. The local and infinitesimal transformations were later given an abstract structures of a (local) Lie group and Lie algebra. The actual transformation is then expressed as a (local) action of the Lie group on a manifold and the infinitesimal transformation is described by fundamental vector fields, that is a realization of the Lie algebra by vector fields on the manifold. So, in the modern language the Lie theory states that there is a one-to-one correspondence between local Lie group actions and realizations of a Lie algebra.

The notion of a local realization is introduced to simplify the classification problem leaving aside the global structure of the manifold and the realizing vector fields. In terms of Lie group action it would correspond to something we are going to call \emph{locally defined action}---an action that is local not only in the group variable but it is also defined only in some neighborhood of a given point of some manifold. Formally, it could be again defined as an equivalence class of local realizations that coincide in a given point.

\section{Equivalence of realizations}
\label{sec.equiv}

To state the classification problem for realizations, we present a definition of equivalence.

\begin{definition}
\label{D.equiv}
Let $\g$ be a Lie algebra and $M_1$ and $M_2$ manifolds. Let $A$ be a subgroup of $\Aut(\g)$. Realizations $R_1\colon\g\to\Vect(M_1)$ and $R_2\colon\g\to\Vect(M_2)$ are called {\em $A$-equivalent} if there exist an automorphism $\alpha\in A$ and a diffeomorphism $\Phi\colon M_1\to M_2$ such that $R_2(\alpha(v))=\Phi_*\,R_1(v)$ for all $v\in\g$. If we do not consider automorphisms (so $A=\{\id\}$) the realizations are called just equivalent (or {\em strongly} equivalent in case we need to emphasise that we do not consider automorphisms). For inner automorphisms $A=\Inn\g$ we will write shortly Inn-equivalent and for all automorphisms $A=\Aut\g$ we will say Aut-equivalent.
\end{definition}

\begin{definition}
\label{D.locequiv}
Local realization $R_1$ at $p_1\in M_1$ is $A$-equivalent to local realization $R_2$ at $p_2\in M_2$ if there exist their representatives defined in neighborhoods $U_1\owns p_1$, $U_2\owns p_2$ that are $A$-equivalent and the corresponding diffeomorphism $\Phi\colon U_1\to U_2$ satisfies $\Phi(p_1)=p_2$.
\end{definition}

The strong equivalence corresponds to isomorphism of the corresponding local actions. It also similar to the definition of equivalence in case of ordinary representations. On the other hand, the Aut-equivalence corresponds to similitude of actions.

For a given global or local realization $R$ we denote $\bar R$ the corresponding $A$-equivalence class. It~should be clear from context which group of automorphisms $A$ we consider.

Since faithfulness of the realization or the dimension of the realizing manifold is invariant under the equivalence, we can assign those characteristics to the classes. In particular, in case of local realizations, the global structure of the manifold is irrelevant and the dimension is the only characteristic of the manifold, so we often refer to a {\em class of local realizations in $m$ variables}.

Every class of local realizations in $m$ variables $\bar R$ of an $n$-dimensional Lie algebra $\g$ can be represented by a realization defined in a neighborhood of zero in $\R^m$, so it is determined by $n\cdot m$ functions $\xi_{ij}$, $i=1,\dots,n$, $j=1,\dots,m$ defined in some neighborhood of zero at $\R^m$ that form one of the representatives $R_0\in\bar R$
\begin{equation}
\label{eq.locreal}
R_0(e_i)=\sum_{j=1}^m\xi_{ij}(x_1,\dots,x_m)\partial_{x_j}.
\end{equation}
It means that for all representatives $R$ defined in a neighborhood of a point $p\in M$ there exist coordinates $(x_1,\dots,x_m)$ in some neighborhood of $p$ such that the coordinate expression of $R$ coincides with \eqref{eq.locreal}.

This also illustrates the connection with the definition of realizations by derivations of formal power series. In this case, we have formal power series instead of the functions $\xi_{ij}$ and the equivalence is provided by formal coordinate change preserving zero (since all the power series are centered at zero). Therefore, there is a one-to-one correspondence between local realizations at zero that have an analytic representative and formal realizations that have a convergent representative and also between the corresponding classes. In the case of so called transitive realizations (see Section \ref{sec.trans}), it can be shown that all classes of local realizations have an analytic representative and all classes of formal realizations have a convergent representative.

%Such a definition of local objects and their equivalence on a manifold could be defined also for plain vector fields, not realizations, and it can be also generalized for tensors on a manifold for example. To formulate Theorem \ref{T.reg} characterizing classification of regular realizations, we will need following definition
%
%\begin{definition}
%\label{D.tensorequiv}
%Let $X$ be a set. Class of functions $F\colon U\to X$, where $U$ is a neighborhood of $0\in\R^m$ that coincide in some neighborhood of zero are called \emph{local $X$-valued maps at zero}. Two such classes are \emph{equivalent} if there exist their representatives $F_1\colon U_1\to X$ and $F_2\colon U_2\to X$ and a diffeomorphism $\Phi\colon U_1\to U_2$ such that $\Phi(0)=0$ and $F_1=F_2\circ\Phi$.
%\end{definition}

\section{Transitive local realizations}
\label{sec.trans}

\begin{definition}
Let $R$ be a realization of $\g$ on $M$. The {\em rank} of the realization $R$ at a point $p\in M$ is the rank of the linear map $R_p\colon\g\to T_pM$, $v\mapsto R(v)_p$, that is, $\rank R_p=\dim R(\g)_p=\dim\{R(v)_p\mid v\in\g\}$. If the function $p\mapsto\rank R_p$ is locally constant at $p_0$ we say that $R$ is {\em regular} at $p_0$. Local realizations at $p$ are called {\em regular} if their representatives are regular at $p$.
\end{definition}

It is clear that rank of a realization is always less or equal to the dimension of the manifold $m$. Realizations with maximal rank are called transitive.

\begin{definition}
A realization $R$ of $\g$ on $M$ is called \emph{transitive} if $\rank R_p=m$ for all $p\in M$. A local realization in $m$ variables at $p$ is called transitive if $\rank R_p=m$.
\end{definition}

This definition of transitivity of a (local) realization corresponds to the transitivity of a (local) action associated with the realization that is given by integrating the realizing vector fields. Classification of those local actions leads to classification of local transitive realizations.

Any transitive local realization at $p$ of a given Lie algebra $\g$ defines a subalgebra $\h=\ker R_p\subset\g$ as the kernel of the linear map $\g\to T_pM$, $v\mapsto R(v)_p$. This is obviously invariant with respect to coordinate change, so it does not depend on the representative chosen. The codimension of $\h$ equals to the rank of $R$. It can be proven that there is actually a one-to-one correspondence between strong classes of local realizations and subalgebras. This can be shown very easily using the theory of the corresponding local Lie group actions \cite{shirokov2013}.

First of all, take a local Lie group $G$ corresponding to the Lie algebra $\g$ and an action $\pi$ locally defined at $p$ corresponding to local realization $R$ at $p$. Then the kernel $\h$ of $R_p$ corresponds to the stabilizer $G_p$ of $\pi$. From the theory of a Lie group action on a manifold, we know that every right action of $G$ is isomorphic to right multiplication of $G$ on the space of right cosets $G_p\rc G$. So, we have the uniqueness---all realizations corresponding to a subalgebra $\h$ are equivalent. Now, the existence. Given an arbitrary subalgebra $\h\subset\g$ and taking the corresponding subgroup $H\subset G$, the right multiplication of $G$ on $H\rc G$ has a stabilizer $H$. Fundamental vector fields of this action then form the realization of $\g$ corresponding to the subalgebra $\h$.

This correspondence can be formulated purely algebraically without need of introducing (local) Lie groups and their action. In the case of realizations by formal power series over general field $F$, the correspondence was proven by Guillemin and Sternberg \cite{Guillemin1964}. Later, Blattner \cite{Blattner1969} came with even more abstract proof of correspondence between subalgebras and certain classes of representations. These works are valuable not only because they are very general in the definition of realization, but they are also obtained purely algebraically.

Since the subalgebra corresponding to a transitive local realization $R$ composed with an automorphism $\alpha\in A\subset\Aut\g$ is just $\alpha(\h)$, where $\h$ is the subalgebra corresponding to $R$, we conclude, that there is also one-to-one correspondence between $A$-classes of transitive local realizations and $A$-conjugacy classes of $\g$ subalgebras.

We are also able to characterize the faithfulness of the realization in terms of the subalgebra properties. From the theory of group actions we know that the kernel of an action of right multiplication on $H\rc G$ is the largest normal subgroup contained in $H$. Transferring these relations to Lie algebras and realizations we have the following propositions.

\begin{lemma}
\label{L.ideal}
The kernel of a local transitive realization of $\g$ is the largest ideal contained in the corresponding subgroup of $\g$.
\end{lemma}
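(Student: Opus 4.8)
The plan is to prove the statement in its evident intended form: the kernel $\ker R=\{v\in\g: R(v)\equiv0 \text{ near } p\}$ of a local transitive realization $R$ at $p$ is the largest ideal of $\g$ contained in the stabilizer subalgebra $\h=\ker R_p$ (the word ``subgroup'' in the statement should read ``subalgebra''). Two facts are immediate. As $R$ is a homomorphism of Lie algebras on a neighborhood of $p$, its kernel $\ker R$ is an ideal of $\g$; and $R(v)\equiv0$ forces $R(v)_p=0$, so $\ker R\subseteq\h$. Hence $\ker R$ is itself an ideal contained in $\h$, and the entire content of the lemma is the maximality claim: every ideal $\mathfrak n\subseteq\h$ is contained in $\ker R$.

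The main tool is the equivariance of a realization under the adjoint action,
\[
(\phi_t^w)_*\,R(v)=R\big(\e^{-t\,\ad_w}v\big),
\]
where $\phi_t^w$ denotes the local flow of the vector field $R(w)$. This identity requires no Lie group: differentiating $G(t):=(\phi_{-t}^w)_*R(\e^{-t\,\ad_w}v)$ and using $[R(w),R(v)]=R(\ad_w v)$ together with $(\phi_t^w)_*R(w)=R(w)$, the two contributions to $G'(t)$ cancel, so $G\equiv G(0)=R(v)$, which is the asserted identity. Now fix an ideal $\mathfrak n\subseteq\h$ and consider its common zero set $W=\{q: R(u)_q=0 \text{ for all } u\in\mathfrak n\}$. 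Since $\mathfrak n$ is an ideal it is $\ad_w$-invariant, hence invariant under $\e^{t\,\ad_w}$ (a convergent operator exponential on the finite-dimensional $\g$); so, applying the identity with $v=\e^{t\,\ad_w}u$ and evaluating both sides at $\phi_t^w(q)$, for every $u\in\mathfrak n$ and $q\in W$,
\[
R(u)_{\phi_t^w(q)}=\big(\d\phi_t^w\big)_q\big(R(\e^{t\,\ad_w}u)_q\big)=0,
\]
because $\e^{t\,\ad_w}u\in\mathfrak n$. Thus $W$ is invariant under the flow of every $R(w)$.

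Finally I would invoke transitivity. Choosing $w_1,\dots,w_m\in\g$ so that $R(w_1)_p,\dots,R(w_m)_p$ is a basis of $T_pM$, the map $(t_1,\dots,t_m)\mapsto\phi_{t_1}^{w_1}\!\circ\cdots\circ\phi_{t_m}^{w_m}(p)$ is a local diffeomorphism onto a neighborhood of $p$; since $p\in W$ and $W$ is stable under each $\phi_{t_i}^{w_i}$, this whole neighborhood lies in $W$. Therefore $R(u)\equiv0$ near $p$ for every $u\in\mathfrak n$, i.e.\ $\mathfrak n\subseteq\ker R$, which is the maximality we needed. This reproduces, purely at the level of $\g$, the group-theoretic fact quoted before the lemma that the kernel of right multiplication on $H\rc G$ is the normal core of $H$. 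The step I expect to demand the most care is the equivariance identity together with the locality bookkeeping around it: the flows $\phi_t^w$ exist only for small $t$ on a small neighborhood, so one must fix the neighborhood and the time ranges at the outset and check that every composition used to sweep out a neighborhood of $p$ stays within the domain on which $R$, the flows, and the identity are all valid.
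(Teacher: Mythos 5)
Your proof is correct, but it takes a genuinely different route from the paper. The paper offers no written proof of this lemma at all: it simply invokes the group-theoretic fact that the kernel of the right-multiplication action of $G$ on $H\rc G$ is the largest normal subgroup contained in $H$, and declares the lemma to be the result of ``transferring'' this to Lie algebras via the correspondence between transitive local realizations and local coset actions established earlier in Section 4. You instead give a self-contained infinitesimal argument that never leaves the category of vector fields: the equivariance identity $(\phi_t^w)_*R(v)=R(\e^{-t\ad_w}v)$ (correctly derived by differentiating $G(t)$ and using $\tfrac{\d}{\d t}(\phi_t^X)^*Y=(\phi_t^X)^*[X,Y]$), the flow-invariance of the common zero set of an ideal $\mathfrak n\subseteq\h$, and transitivity to sweep out a neighborhood of $p$ by composed flows. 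Your identification of the correct reading of the statement (``subgroup'' is a typo for ``subalgebra'', and the kernel is the germ-level kernel, which is indeed an ideal contained in $\h=\ker R_p$) is also right, as is your closing caution about the locality of the flows. What each approach buys: the paper's transfer is a one-line appeal to standard Lie theory but leans on the local group--action correspondence, which is itself only sketched; your argument is longer but purely at the level of the realization and would survive in settings where one prefers not to integrate to a local group action. Both establish the same maximality claim, so this is a valid alternative proof rather than a gap.
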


\begin{lemma}
A transitive realization of a Lie algebra $\g$ is faithful if and only if the corresponding subalgebra of $\g$ does not contain any non-trivial ideal of $\g$.
\end{lemma}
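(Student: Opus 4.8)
The plan is to deduce the statement directly from Lemma~\ref{L.ideal}, which already identifies the kernel of the realization homomorphism. First I would recall that faithfulness of $R$ means injectivity of the homomorphism $R\colon\g\to\Vect M$, i.e.\ that $\ker R=\{0\}$. The essential point to keep straight is that two different kernels are in play: the \emph{pointwise} kernel $\ker R_p=\{v\in\g\mid R(v)_p=0\}$, which is the subalgebra $\h$ associated with the transitive local realization, and the kernel of the homomorphism $\ker R=\{v\in\g\mid R(v)=0\}$, which is the object governing faithfulness. One checks immediately that $\ker R\subset\h$, since $R(v)=0$ forces $R(v)_p=0$, and that $\ker R$ is an ideal of $\g$, being the kernel of a Lie algebra homomorphism.

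Next I would invoke Lemma~\ref{L.ideal}, which asserts precisely that $\ker R$ is the largest ideal of $\g$ contained in $\h$; call it $\mathfrak n$. The claim then collapses to the tautology that $\mathfrak n=\{0\}$ if and only if $\h$ contains no non-trivial ideal of $\g$. Indeed, if $\h$ contains no nonzero ideal, then in particular its largest such ideal $\mathfrak n$ is zero, so $\ker R=\{0\}$ and $R$ is faithful; conversely, if $\h$ contains some nonzero ideal $\mathfrak a$, then maximality gives $\mathfrak a\subset\mathfrak n=\ker R$, whence $\ker R\neq\{0\}$ and $R$ is not faithful. For the local version I would note that, by Definition~\ref{D.local}, a local realization is faithful exactly when each representative is, and since both $\h=\ker R_p$ and the ideal $\ker R$ are independent of the chosen representative, the equivalence transfers to local realizations verbatim.

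I do not expect a genuine obstacle here: there is no delicate computation, and the whole content is carried by Lemma~\ref{L.ideal}. The one place demanding care is the conceptual distinction highlighted above---separating the pointwise kernel $\ker R_p$ from the homomorphism kernel $\ker R$, and recognising that Lemma~\ref{L.ideal} is exactly the bridge identifying $\ker R$ as the largest ideal sitting inside $\h$. Once that identification is granted, the proof is purely formal in both directions.
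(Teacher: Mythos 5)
Your proposal is correct and follows the same route the paper intends: the faithfulness criterion is read off directly from Lemma~\ref{L.ideal}, which identifies $\ker R$ as the largest ideal of $\g$ contained in $\h$, so $\ker R=\{0\}$ exactly when $\h$ contains no non-trivial ideal. The paper leaves this step implicit, and your explicit separation of the pointwise kernel $\ker R_p$ from the homomorphism kernel $\ker R$ is a worthwhile clarification rather than a deviation.
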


This means that classification of transitive local realizations of a given Lie algebra with respect to a group of automorphisms $A$ is completely equivalent to classification of subalgebras with respect to $A$.

Moreover, the work of Shirokov et al. \cite{shirokov1997,shirokov2013} provides a very simple method for explicit computation of such realizations. The components of the realizing vector fields are computed from the structure constants using only matrix exponentiation and inversion and can be done completely automatically by a computer. This method was already used to classify transitive realizations of low-dimensional Poincar\'e algebras \cite{Nesterenko2015} and Galilei algebras \cite{Nesterenko2016}.

In short, the local transitive realization corresponding to a subalgebra $\h\subset\g$ of codimension $m$ is computed as follows. Choose a basis $(e_1,\dots,e_n)$ of $\g$ such that $(e_{m+1},\dots,e_n)$ is a basis of $\h$. Compute a matrix $\Omega$, whose elements are given by the following formula
$$\Omega(x_1,\dots,x_n)_{ij}=[\exp(-x_1\ad_{e_1})\cdot\exp(-x_2\ad_{e_2})\cdots\exp(-x_{j-1}\ad_{e_{j-1}})]_{ij}.$$
Then the first $m$ columns of the inverse matrix $\Omega^{-1}$ do not depend on $x_{m+1},\dots,x_n$ and form the realizing vector fields in $m$ variables $x_1,\dots,x_m$ in a neighborhood of zero
$$[R(e_j)_{x_1,\dots,x_m}]_i=[\Omega^{-1}(x_1,\dots,x_n)]_{ij},\quad i=1,\dots,m.$$

\subsection{Lie's conjecture}
In \cite{lie1893} Lie conjectured that any local transitive realization can be expressed (after a suitable change of coordinates) by entire functions of coordinates and exponentials of linear functions in coordinates (over $\C$).  Over $\R$ (or arbitrary other field) it can be reformulated as follows. Any local transitive realization can be expressed in certain coordinates (i.e. any class has such representative in $\R^m$) as functions of coordinates that are a solution of some differential equation with constant coefficients.

For certain types of realizations, this conjecture was proven by Draisma \cite{draisma2002} but it was not proven generally (and Draisma believes that it is generally not true). The Shirokov's metod does not prove or disprove this conjecture. Nevertheless, we are able to formulate weaker proposition. Every realization constructed by the Shirokov's method is a rational function of functions of coordinates that are a solution of a differential equation with constant coefficients. So, over $\C$ it would be a rational function of exponentials.

\section{Inner automorphisms}

In this section, we describe, how inner automorphisms act on transitive local realizations, which will be important for classification of regular local realizations.

\begin{lemma}
\label{L.globinn}
Let $G$ be a (global) Lie group and $\g$ its Lie algebra. Let $H$ be a subgroup of $G$ and take $g\in G$. Then (global) realization on $H\rc G$ by fundamental vector fields of right multiplication by $G$ is equivalent to realization by fundamental vector fields on the manifold $\tilde H\rc G$, $\tilde H=g^{-1}Hg$.
\end{lemma}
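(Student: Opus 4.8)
The plan is to exhibit an explicit diffeomorphism between the two coset spaces and verify that it intertwines the two realizations. Consider the map
\begin{equation*}
\Psi\colon H\rc G\to \tilde H\rc G,\qquad \Psi(Hx)=\tilde H g^{-1}x=g^{-1}Hg\cdot g^{-1}x=g^{-1}Hx.
\end{equation*}
First I would check this is well defined and bijective: if $Hx=Hx'$ then $x'=hx$ for some $h\in H$, so $g^{-1}Hx'=g^{-1}Hhx=g^{-1}Hx$ since $Hh=H$; the inverse is $\tilde Hy\mapsto Hg y$, and both maps are smooth because they descend from the smooth map $x\mapsto g^{-1}x$ (left translation $L_{g^{-1}}$ on $G$) via the submersions $G\to H\rc G$ and $G\to \tilde H\rc G$. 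Hence $\Psi$ is a diffeomorphism. The point of choosing $\Psi$ to come from left translation is that right multiplication by $G$ manifestly commutes with left translation, and this is exactly the action whose fundamental vector fields define our realizations.

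The core step is to show $\Psi_*\,R_1(v)=R_2(v)$ for every $v\in\g$, where $R_1$, $R_2$ are the fundamental vector field realizations of right multiplication on $H\rc G$ and $\tilde H\rc G$ respectively. I would work at the level of the flows rather than the vector fields themselves, since that avoids any coordinate computation. The flow of $R_1(v)$ through time $t$ is right multiplication by $\exp(tv)$ on $H\rc G$, i.e.\ $Hx\mapsto Hx\exp(tv)$, and similarly on $\tilde H\rc G$. Then I would simply compute
\begin{equation*}
\Psi\bigl(Hx\exp(tv)\bigr)=g^{-1}Hx\exp(tv)=\bigl(g^{-1}Hx\bigr)\exp(tv),
\end{equation*}
which is precisely the time-$t$ flow of $R_2(v)$ applied to $\Psi(Hx)=g^{-1}Hx$. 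Because $\Psi$ carries the flow of $R_1(v)$ to the flow of $R_2(v)$ for each $t$, differentiating at $t=0$ gives $\Psi_*\,R_1(v)=R_2(v)$, which is the required intertwining relation with automorphism $\alpha=\id$. Thus $R_1$ and $R_2$ are (strongly) equivalent in the sense of Definition~\ref{D.equiv}.

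The only genuine subtlety, and the step I would be most careful about, is the bookkeeping of sides: one must track consistently whether $H\rc G$ denotes right cosets and whether the action in question is right multiplication, so that the left-translation diffeomorphism really does commute with it. The choice $\tilde H=g^{-1}Hg$ (as opposed to $gHg^{-1}$) is forced by this consistency requirement, as the computation above shows: it is exactly what makes $g^{-1}Hg\cdot g^{-1}=g^{-1}H$ collapse correctly. Once the conventions are fixed, everything else is the routine verification sketched above, and no local coordinate expressions or explicit formulas for the fundamental vector fields are needed.
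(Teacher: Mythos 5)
Your proof is correct and follows essentially the same route as the paper: the paper observes that the point $Hg\in H\rc G$ has stabilizer $\tilde H=g^{-1}Hg$ and invokes the standard isomorphism of a transitive right action with right multiplication on the coset space of a stabilizer, while you simply write out that isomorphism explicitly as $\Psi(Hx)=g^{-1}Hx$ and verify equivariance at the level of flows. Your version is a self-contained unpacking of the citation, and your side-convention bookkeeping (why $\tilde H=g^{-1}Hg$ rather than $gHg^{-1}$) matches the paper's computation of the stabilizer of $\bar e g$.
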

\begin{proof}
Take the action $\pi$ of right multiplication corresponding to the first realization. Then $H$ is the stabilizer of the class corresponding to unity $\bar e=H$. We can easily see that $\tilde H=g^{-1}Hg$ is the stabilizer of a point $\bar eg=Hg=\bar g$. At the same time, $\pi$ is isomorphic to the right multiplication on $\tilde H\rc G$, which is the action that corresponds to the second realization. This action isomorphism being also a manifold diffeomorphism provides the equivalence.
\end{proof}

This means that, globally, different subgroups or subalgebras can correspond to equivalent global realizations. In terms of local realizations, we can formulate the following proposition.

\begin{lemma}
\label{L.inn}
Let $R$ be a transitive realization of $\g$ on $M$, $p\in M$. Then for every $q\in M$ the local realizations $R|_p$ and $R|_q$ are equivalent with respect to inner automorphisms. Conversely, for every neighborhood $U$ of $p$ there exists a neighborhood of unity $V$ in the group of inner automorphisms such that for every $\alpha\in V$ there exists $q\in U$ such that $R|_p$ and $R|_q$ correspond to $\alpha$-conjugated subalgebres and hence are $\alpha$-equivalent.
\end{lemma}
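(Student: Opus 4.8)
The plan is to integrate the realization to a group action and then transport the stabilizer subalgebras, using the correspondence between transitive local realizations and subalgebras from Section~\ref{sec.trans}. Fix the local Lie group $G$ of $\g$ and let $\pi$ be the right action obtained by integrating the realizing vector fields of $R$; write $p\cdot g=\pi(g,p)$. The local realization $R|_q$ corresponds to the stabilizer subalgebra $\h_q=\ker R_q$ for every $q$, and the basic fact about group actions is that the stabilizer of $p\cdot g$ is the conjugate $g^{-1}G_p g$, so on the level of Lie algebras $\h_{p\cdot g}=\Ad_{g^{-1}}\h_p$.

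For the forward direction, observe that transitivity of $R$ means $R_q\colon\g\to T_qM$ is onto for every $q$, so the orbit map is a submersion and the orbit of $p$ is open; assuming $M$ connected (otherwise one restricts to the component of $p$) this orbit is all of $M$, hence every $q\in M$ equals $p\cdot g$ for some $g\in G$. Then $\h_q=\Ad_{g^{-1}}\h_p$ with $\Ad_{g^{-1}}\in\Inn\g$, i.e.\ $\h_p$ and $\h_q$ are Inn-conjugate. By the one-to-one correspondence between $A$-conjugacy classes of subalgebras and $A$-classes of transitive local realizations established in Section~\ref{sec.trans}, the realizations $R|_p$ and $R|_q$ are Inn-equivalent, which is the first claim.

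For the converse, fix a neighborhood $U$ of $p$ and put $W=\{g\in G\mid p\cdot g\in U\}$, which is an open neighborhood of $e$ by continuity of $g\mapsto p\cdot g$. I would define $V$ to be the image of $W$ under the continuous map $g\mapsto\Ad_{g^{-1}}$ into $\Inn\g$. Granting that $V$ is a neighborhood of $\id$, the conclusion is immediate: for $\alpha\in V$ pick $g\in W$ with $\Ad_{g^{-1}}=\alpha$ and set $q=p\cdot g\in U$; then $\h_q=\alpha(\h_p)$, so $R|_p$ and $R|_q$ correspond to $\alpha$-conjugated subalgebras and are therefore $\alpha$-equivalent.

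The step that needs genuine justification---and the main obstacle---is that $V$ actually contains a neighborhood of $\id$ in $\Inn\g$. This is where I would use that $\Ad\colon G\to\Inn\g$ is a surjective homomorphism of Lie groups and hence an open map, so the image of the neighborhood $W$ of $e$ is a neighborhood of $\id$; equivalently and more concretely, one checks that $x\mapsto e^{-\ad_x}$ maps small $x\in\g$ onto a neighborhood of $\id$ in $\Inn\g$, and for such small $x$ the element $g=\exp x$ lies in $W$ once $\alpha$ is close enough to $\id$, because then $q=p\cdot\exp x$ stays in $U$. All remaining points are the standard bookkeeping of the stabilizer/subalgebra dictionary already used in Section~\ref{sec.trans}.
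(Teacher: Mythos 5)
Your proof is correct and follows essentially the same route as the paper: the forward direction is the identical argument via the integrated action and conjugation of stabilizers ($\h_{p\cdot g}=\Ad_{g^{-1}}\h_p$), and the converse uses the same local stabilizer-transport idea that the paper delegates to ``a local version of Lemma~\ref{L.globinn}.'' If anything, your converse is spelled out in more detail than the paper's one-line reference, correctly isolating the only nontrivial point (openness of $g\mapsto\Ad_{g^{-1}}$ onto a neighborhood of $\id$ in $\Inn\g$).
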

\begin{proof}
Denote $\h$ and $\tilde\h$ the subalgebras corresponding to local realizations $R|_p$ and $R|_q$. Choose a local Lie group $G$ and denote $H$, $\tilde H$ the corresponding subgroups. Denote $\pi$ the corresponding action of $G$ on $M$. From transitivity of the realization there exist $g\in G$ such that $q=\pi(p,g)$, so $\tilde H=g^{-1}Hg$. Thus, $\tilde\h=\Ad_{g^{-1}}\h$, so the realizations are equivalent with respect to this inner automorphism.

The second proposition is just a local version of Lemma \ref{L.globinn}.
\end{proof}

\begin{example}
Take a Lie algebra $\g_{3,1}=\Span\{e_1,e_2,e_3\}$, $[e_2,e_3]=e_1$. All one-dimensional subalgebras and the corresponding realizations in a neighborhood of $(0,0)\in\R^2$ are following.
\begin{eqnarray*}
\Span\{e_1\}\colon&\qquad e_1\mapsto 0,\quad e_2\mapsto\partial_1,\quad e_3\mapsto\partial_2\\
\Span\{e_2-ae_1\}\colon&\qquad e_1\mapsto\partial_1,\quad e_2\mapsto (a-x_2)\partial_1,\quad e_3\mapsto\partial_2\\
\Span\{e_3-be_2-ae_1\}\colon&\qquad e_1\mapsto\partial_1,\quad e_2\mapsto\partial_2,\quad e_3\mapsto(a+x_2)\partial_1+b\partial_2
\end{eqnarray*}

All the subalgebras in the second row are equivalent with respect to inner automorphisms and, for fixed $b$, all the realizations in the third row are equivalent with respect to inner automorphisms. In other words, classification of subalgebras and realizations with respect to inner automorphisms is obtained by removing the parameter $a$ (setting $a:=0$).

All these realizations are, of course, mutually locally inequivalent at zero (they are representatives of different local realizations). However, if we consider them as global realizations on $\R^2$, then all realizations in the second row are equivalent and this equivalence is provided by simple translation in $x_2$. The same holds for the last row for fixed $b$.

%As a consequence, the original classification with respect to strong equivalence has a sufficient subsystem obtained by restricting the parameter $a$ to a dense subset of $\R$.
\end{example}

%Generally, we can formulate the following proposition.
%
%\begin{proposition}
%Let $R^{(a_1,\dots,a_s)}$ be continuously parametrized system of transitive local realizations such that all those realizations are equivalent with respect to inner automorphisms. Then by restricting the parameters $a_1,\dots,a_s$ to a dense subset of their domain of definition we obtain a sufficient subsystem with respect to strong equivalence. All sufficient subsystems are obtained by this procedure.
%
%Consequently, if $A\supset\Inn\g$, $\TT$ is classification of all local realizations with respect to $A$, then $\TT$ has no proper sufficient subsystems with respect to $A$.
%\end{proposition}

\section{Topology of subalgebra and realization systems}
\label{sec.topology}

In the following section we are going to construct new realizations by interpreting parameters of transitive realizations as new coordinates. This is, of course, possible only in the case when the transitive realizations depend ``smoothly'' on those parameters.

We are going to say that a function $F\colon\R^s\to\Vect M$ is smooth if a vector field $X^F\in\Vect(M\times \R^s)$ defined as $X^F_{(p,x)}=F(x)_p$ for $p\in M$ and $x\in\R^s$ is smooth. As indicated for example in \cite{Munkers}, Theorem 46.11, such a definition corresponds to compact open topology on $\Vect M$. For further references on the vector fields topology, see e.g. \cite{Jafarpour}. Nevertheless, we will not use any special properties of such topology here.

This induces a topology and the notion of smoothness on the space of realizations. We also get a topology on the space of local realizations and spaces of $A$-classes of local realizations as a topological quotient spaces. Note, however, that those quotient spaces may not be even Hausdorff.

Nevertheless, we can again induce the notion of smooth map. A map between two quotient spaces will be called smooth if it is locally a quotient of a smooth map. We formulate it precisely in the following definition.

\begin{definition}
\label{D.smooth}
Let $M_1$ and $M_2$ be manifolds and let $\bar M_1$ and $\bar M_2$ be their quotient spaces. A map $\bar\Phi\colon\bar M_1\to\bar M_2$ will be called \emph{smooth} in $\bar x_0\in\bar M_1$ if there exists $x_0\in\bar x_0$, its neighborhood $U$, and a map $\Phi\colon U\to M_2$ such that for all $x\in U$ $\Phi(x)\in\bar\Phi(\bar x)$, where $\bar x$ is the class corresponding to $x$. A smooth bijection, whose inversion is smooth as well, will be called a \emph{diffeomorphism}. A smooth injection, whose inversion is smooth as well, will be called an \emph{embedding}.
\end{definition}

\begin{remark}
This definition is compatible with the quotient topology in a sense that every smooth map is continuous, so every diffeomorphism is a homeomorphism. A composition of such smooth maps is smooth.
\end{remark}

Let $\SS_m$ be the space of all subalgebras of codimension $m$ of a given Lie algebra $\g$. This is an affine subvariety of the Grassmannian $\Gr(\g,n-m)$, $n=\dim\g$. A map $F\colon\R^s\to\SS_m$ will be called smooth if it is smooth as a map $\R^s\to\Gr(\g,n-m)$. A smooth map to the space of subalgebra $A$-classes is again defined in sense of Definition \ref{D.smooth}.

\begin{lemma}
Let $\g$ be a Lie algebra, $\bar\SS_m$ the space of all $A$-classes of subalgebras of codimension $m$ in $\g$ and $\bar\TT_m$ the space of all $A$-classes of local transitive realizations in $m$ variables. Then $\bar\SS_m$ is diffeomorphic to $\bar\TT_m$.
\end{lemma}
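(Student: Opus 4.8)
The correspondence recalled in Section~\ref{sec.trans} already yields a bijection $\Psi\colon\bar\SS_m\to\bar\TT_m$ sending the $A$-class of a subalgebra $\h$ to the $A$-class of its associated transitive local realization; the $A$-equivariance of the correspondence (the realization attached to $\alpha(\h)$ being $R$ composed with $\alpha$) makes $\Psi$ well defined on the quotients. The whole content of the lemma is therefore to upgrade this bijection to a diffeomorphism in the sense of Definition~\ref{D.smooth}, i.e.\ to exhibit smooth local lifts of both $\Psi$ and $\Psi^{-1}$. The plan is to construct these two lifts explicitly and check they are mutually inverse.

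For $\Psi^{-1}$ I would use the kernel-of-evaluation map $\kappa\colon R\mapsto\ker R_0$, where $R_0\colon\g\to T_0\R^m$ evaluates the realizing vector fields at the base point. By the very definition of the subalgebra attached to a transitive realization this map recovers $\h$; it intertwines $A$-equivalence of realizations with $A$-conjugacy of subalgebras, and it is constant on strong classes (a base-point-preserving diffeomorphism acts by an isomorphism on $T_0\R^m$ and so does not change the kernel), hence it descends to a map $\bar\TT_m\to\bar\SS_m$ that is inverse to $\Psi$. For smoothness I would feed it a smooth family $R_t$, $t\in\R^s$: smoothness of the total vector field on $\R^m\times\R^s$ makes $t\mapsto(R_t)_0$ a smooth family of rank-$m$ linear maps $\g\to\R^m$, and passing to kernels is a smooth operation on the open set of full-rank maps, so $t\mapsto\ker(R_t)_0$ is smooth into $\Gr(\g,n-m)$ and lands in $\SS_m$. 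This is exactly the smooth local lift of $\kappa$ required by Definition~\ref{D.smooth}.

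For $\Psi$ itself I would use Shirokov's construction recalled in Section~\ref{sec.trans}. Near a given subalgebra the tautological bundle over the Grassmannian is locally trivial, so one can choose a smoothly varying adapted basis $(e_1,\dots,e_n)$ with $(e_{m+1},\dots,e_n)$ spanning $\h$ as $\h$ ranges over a neighborhood $U$ of the base point in $\SS_m$. In such a basis the structure constants depend smoothly on $\h$, and the realizing vector fields are obtained from them by the matrix exponentials and the single matrix inversion producing $\Omega^{-1}$---all smooth, indeed analytic, operations. Hence $\h\mapsto R_\h$ is a smooth map $U\to\TT_m$; since any two realizations attached to the same subalgebra lie in a single strong class, its value depends on $\h$ only through its $A$-class, which gives the smooth lift of $\Psi$. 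A direct evaluation gives $\Omega^{-1}(0)=\mathrm{Id}$, whence $\ker(R_\h)_0=\h$ and $\kappa\circ\Psi=\id$, confirming that the two lifts implement mutually inverse maps.

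With smooth lifts of both $\Psi$ and $\Psi^{-1}=\kappa$ in hand, Definition~\ref{D.smooth} gives at once that $\Psi$ is a diffeomorphism. I expect the only genuinely delicate point to be the smooth choice of adapted bases: one must select a basis of $\h$ smoothly throughout a neighborhood inside the possibly singular variety $\SS_m$, which I would handle by working in the ambient smooth Grassmannian and invoking local triviality of the tautological bundle there (equivalently, the explicit graph charts of $\Gr(\g,n-m)$). The remaining ingredients---smoothness of evaluation at the base point in a smooth family and smoothness of kernel-taking on full-rank maps---are routine.
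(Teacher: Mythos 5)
Your proposal is correct and follows essentially the same route as the paper: the bijection from Section~\ref{sec.trans}, smoothness of the kernel-of-evaluation map $R\mapsto\ker R_p$ in one direction, and smooth dependence of Shirokov's construction on the subalgebra in the other, then passing to quotients. You simply supply the details (smooth adapted bases via the tautological bundle, smoothness of kernel-taking on full-rank maps) that the paper's terse proof leaves implicit.
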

\begin{proof}
It is sufficient to prove this proposition for space of subalgebras $\SS_m$ and space of strong classes $\bar\TT_m$. Then we only ``factor'' both sides.

In Section \ref{sec.trans} we showed that there is a bijection between these two sets. It is clear that the map $\TT_m\to\SS_m$, $R\mapsto\ker R_p$ is smooth. Therefore, the same holds for the map of classes. To show the smoothness of the inverse, we can make use of the Shirokov's computation that also smoothly depends on the choice of the subalgebra.
\end{proof}

Therefore, the space of strong classes of transitive realizations actually is an homeomorphic image of an algebraic variety and therefore it is a Hausdorff space. The space of general $A$-classes, however, does not have to be.

%\begin{lemma}
%\label{L.subalgdecomp}
%Let $\g$ be a Lie algebra, $\bar\SS_m$ the system of all Inn-classes of $\g$ subalgebras with codimension $r$. Consider a decomposition to finite disjoint union $\bar\SS_m=\bigcup_i S_m^i(D_i)$, where $S_m^i\colon D_i\to\bar\SS_m$ is embedding of $D_i$ a domain in $\R^{s_i}$. Then for every smooth map $S\colon U\to \SS_m$, where $U$ is a neighborhood of zero in $\R^s$, there exists $x\in U$ and its neighborhood $V$ such that $S(V)\subset S_m^i(D_i)$ for some $i$.
%\end{lemma}
%\begin{proof}
%I have no idea....
%\end{proof}

\section{Regular local realizations}
\label{sec.nontrans}

In this section we characterize the classification problem for regular realizations. Classification of regular realizations is very important also for general realizations since we have the following lemma.

\begin{lemma}
\label{L.regdense}
Let $R$ be a realization of $\g$ on $M$. Then the regular points of $R$ form an open dense set in $M$. Hence, the set of singular points is nowhere dense.
\end{lemma}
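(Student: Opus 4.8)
The plan is to reduce the statement to the elementary fact that the rank of a matrix with smoothly varying entries is lower semicontinuous, and then to exploit that the rank function takes only finitely many values.

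First I would fix a basis $(e_1,\dots,e_n)$ of $\g$ and work in an arbitrary coordinate chart around a point of $M$. In such a chart the realizing vector fields have smooth components, so writing the coordinate vectors of $R(e_1)_p,\dots,R(e_n)_p$ as the columns of an $m\times n$ matrix $\Xi(x)$ with smooth entries, we have $\rank R_p=\rank\Xi(x)$ locally. Thus the rank function $p\mapsto\rank R_p$ is, in local coordinates, the rank of a smoothly varying matrix, and in particular it takes values only in the finite set $\{0,1,\dots,m\}$.

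Openness of the set of regular points is immediate from the definition: if $p_0$ is regular, then $p\mapsto\rank R_p$ is constant, say equal to $k$, on some neighborhood $U$ of $p_0$; but then for every $q\in U$ the same function is constant on the neighborhood $U$ of $q$, so $q$ is regular as well, and $U$ consists of regular points. For density I would first establish \emph{lower semicontinuity} of the rank: if $\rank\Xi(x_0)=k$, then some $k\times k$ minor of $\Xi$ is nonzero at $x_0$, hence nonzero throughout a neighborhood of $x_0$ by continuity, so $\rank\Xi(x)\ge k$ there. Now take any nonempty open set $W\subseteq M$. Since the rank function assumes only finitely many values, it attains a maximum $k$ on $W$, say at a point $p_0\in W$. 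By lower semicontinuity there is a neighborhood $U\subseteq W$ of $p_0$ on which $\rank R_p\ge k$, while maximality forces $\rank R_p\le k$ on $W\supseteq U$; hence the rank is constant on $U$ and $p_0$ is a regular point lying in $W$. As $W$ was arbitrary, the regular points meet every nonempty open set and are therefore dense.

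The main obstacle is the density argument, and specifically the observation that the interplay of lower semicontinuity with a locally attained maximum of the rank produces local constancy; the finiteness of the value set is precisely what guarantees that such a maximum exists. Once openness and density are in hand, the singular points form the complement of an open dense set, i.e.\ a closed set with empty interior, which is by definition nowhere dense, yielding the final assertion.
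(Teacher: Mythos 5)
Your proposal is correct and follows essentially the same route as the paper: the paper also takes the maximum $r$ of the rank on an arbitrary open set (which exists because the rank takes finitely many values) and detects the points of maximal rank via the minors of the matrix of realizing vector fields, packaging them as a sum of squares to get a continuous function whose nonvanishing locus is open, nonempty, and consists of regular points. Your use of a single nonzero $k\times k$ minor and explicit lower semicontinuity is just a slightly different phrasing of the same argument.
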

\begin{proof}
The set is obviously open. Choose a point $p_0\in M$ and its neighborhood $U$. We find a regular point $p\in U$.

Denote $r:=\max_{p\in U}\rank R_p$. We can easily construct a continuous function $f\colon U\to\R$ such that $\rank R_p=r$ if and only if $F(p)\neq 0$ as sum of squares of some minors of the linear map $R_p$. Preimage of $\R\setminus 0$ is open, nonempty, and contain points, where $\rank R_p$ is locally constant and equal to $r$.
\end{proof}

Now, we generalize the correspondence between transitive realizations and subalgebras to the case of regular realizations.

Let us have a local realization $R$ at $p\in U$ with constant rank $r$ on $U$. Then $R(\g)$ forms an involutive $r$-dimensional distribution so, according to the Frobenius theorem, we can choose coordinates $x_1,\dots,x_m$ on a neighborhood $U$ of $p$, such that $U$ is foliated by integral submanifolds given by equations $x_i={\rm const}$ for $i=r+1,\dots,m$. These integral submanifolds are also the orbits of the action $\pi$ corresponding to the realization $R$. The basis elements $e_1,\dots,e_n$ of $\g$ are, therefore, realized by vector fields of the form
$$R(e_i)=\sum_{j=1}^{r}\xi_{ij}(x_1,\dots,x_m)\partial_{x_j}.$$

This realization induces an $(m-r)$-parameter set of realizations parametrized by $x_{r+1},\dots,x_m$ on the submanifolds $pG\simeq G_p\rc G$. These realizations are transitive, so they are equivalent to the realizations found by the algorithm described in Section \ref{sec.trans}.

Note that the action on the whole manifold uniquely defines transitive action on the orbit of a given point. So, a regular realization $R$ in a given point defines a unique transitive realization on the integral submanifold of the point. We will call this realization a {\em transitive restriction} of $R$. This relation obviously does not break by applying a diffeomorphism or an automorphism. A local diffeomorphism of the whole neighborhood $U$ induces a diffeomorphism of the orbits. An orbit of an action does not change by composing it with an automorphism of the group.% To sum up:

%\begin{proposition}
%\label{P.transrest}
%Let $\g$ be a Lie algebra, $A$ a subgroup of $\Aut\g$. Then a class of $A$-equivalence of local realizations of $\g$ with constant rank $r$ uniquely defines an $A$-class of local transitive realizations of $\g$ in $r$ variables.
%\end{proposition}

To find all non-transitive regular realizations, we can proceed the other way around. Every local regular realization in $m$ variables is of the form
$$R(e_j)_{x_1,\dots,x_m}=R^{(x_{r+1},\dots,x_m)}(e_j)_{x_1,\dots,x_r},$$
where $R^{(a_1,\dots,a_{m-r})}$ is $m-r$ parameter set of transitive realizations. %Nevertheless, if we parametrized the set of transitive realizations in a different way, we could get another inequivalent realization. Moreover, if we took equivalent realizations by applying various different diffeomorphisms and automorphisms to different elements of this set, we could get inequivalent realization. (Although, the corresponding transitive realizations on the particular leaves would be mutually equivalent, the whole realizations might not be if the automorphisms and diffeomorphisms were different.)

Now, we formulate the main theorem. In the formulation we use the term {\em local smooth $s$-parameter set of subalgebra classes with codimension $r$}. By that we mean a smooth map $U\to\bar\SS_r$, where $U$ is a neighborhood of $0\in\R^s$ considering its values only locally (as in Def. \ref{D.local}). By a {\em class} of such maps we mean a class of equivalence up to ``regular reparametrization'', that is, $S$ and $S'$ are equivalent if and only if there exists a local diffeomorphism $\Psi\colon\R^s\to\R^s$, $\Psi(0)=0$ such that $S'=S\circ\Psi$. 

\begin{theorem}
\label{T.reg}
Let $\bar\SS_r$ the system of all Inn-classes of subalgebras with codimension~$r$. For $\bar\h\in\bar\SS_r$ denote $\bar R^{\bar\h}$ the corresponding Inn-class of transitive realizations. Then there is a bijection between Inn-classes of regular local realizations of $\g$ in $m$ variables with rank $r$ and classes of local smooth $(m-r)$-parameter sets of Inn-classes of subalgebras with codimension $r$. For any such $(m-r)$-parameter set $S\colon V\to\bar\SS_r$, we define a local realization $R\in\Vect(U\subset\R^m)$ at zero as follows
\begin{equation}
\label{eq.Treg}
R(e_j)_{x_1,\dots,x_m}=R^{S(x_{r+1},\dots,x_m)}(e_j)_{x_1,\dots,x_r},
\end{equation}
where the representatives $R^{S(x_{r+1},\dots,x_m)}$ are local realizations at $0\in\R^r$ chosen to be smooth in the variables $x_{r+1},\dots,x_m$.
\end{theorem}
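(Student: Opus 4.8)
The plan is to establish the stated bijection by constructing maps in both directions and showing they are mutually inverse, with all constructions respected up to the specified equivalences. The analysis preceding the theorem already provides the essential geometric input: by the Frobenius theorem, any regular local realization of rank $r$ in $m$ variables admits coordinates $(x_1,\dots,x_m)$ in which the realizing vector fields only involve $\partial_{x_1},\dots,\partial_{x_r}$, so that the leaves $x_i=\mathrm{const}$ ($i>r$) are the orbits and $R$ restricts on each leaf to a transitive realization in the $r$ variables $x_1,\dots,x_r$. I would first make this observation into a well-defined map from regular realizations to parameter sets of subalgebra classes: given $R$, the transitive restriction to the leaf through $(0,\dots,0,x_{r+1},\dots,x_m)$ has a kernel subalgebra, and passing to its $\mathrm{Inn}$-class yields a value $S(x_{r+1},\dots,x_m)\in\bar\SS_r$. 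The smoothness of $S$ follows from smoothness of $R$ together with the smoothness of the map $\TT_r\to\SS_r$, $R\mapsto\ker R_p$ established in Section \ref{sec.topology}.

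The reverse direction is exactly formula \eqref{eq.Treg}: given a smooth $(m-r)$-parameter family $S\colon V\to\bar\SS_r$, choose for each parameter value a representative transitive realization $R^{S(\cdots)}$ depending smoothly on the parameters (using the Shirokov construction, which depends smoothly on the subalgebra), and declare $R$ to act on the $x_1,\dots,x_r$ variables by $R^{S(x_{r+1},\dots,x_m)}$ while leaving $x_{r+1},\dots,x_m$ fixed. I would verify that this $R$ is a genuine realization, i.e. that the assignment $e_j\mapsto R(e_j)$ is a Lie algebra homomorphism: the bracket relations hold fiberwise because each $R^{S(a)}$ is a homomorphism, and since no vector field has a $\partial_{x_i}$ component for $i>r$, the parameter dependence never contributes to the brackets. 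Regularity and rank $r$ are immediate from the form of the fields.

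The substantive work is checking that these two maps descend to the quotients and are inverse bijections. For well-definedness one must show that the $\mathrm{Inn}$-class of the constructed $R$ depends only on the class of $S$ up to regular reparametrization: reparametrizing by $\Psi$ corresponds to a diffeomorphism of the $x_{r+1},\dots,x_m$ coordinates, and a fiberwise change of transitive-realization representative within its $\mathrm{Inn}$-class corresponds, by Lemma \ref{L.inn}, to a diffeomorphism in the $x_1,\dots,x_r$ variables, possibly depending smoothly on the parameters. The main obstacle will be precisely this coupling: to realize a pointwise $\mathrm{Inn}$-equivalence of the transitive restrictions as a single \emph{global} (on $U$) diffeomorphism of $\R^m$, one needs the fiberwise diffeomorphisms from Lemma \ref{L.inn} to be chosen smoothly in the transverse parameters $x_{r+1},\dots,x_m$, and one must confirm that such a diffeomorphism preserves the product coordinate structure so that it does not reintroduce transverse components into the vector fields.

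To handle this, I would exploit the second part of Lemma \ref{L.inn}: since the inner-automorphism conjugations realizing the equivalence between nearby transitive restrictions are themselves implemented by points $q$ reachable within the orbit, the required diffeomorphism is built from the flow of the realizing vector fields and therefore automatically respects the foliation. Smoothness in the parameters then follows from smoothness of $R$ and of the Shirokov data. Finally, composing the two maps in either order returns the original object up to the allowed equivalence—starting from $R$, extracting $S$, and rebuilding via \eqref{eq.Treg} recovers $R$ up to a fiber-preserving diffeomorphism and an inner automorphism, while starting from $S$ and reading off the transitive restrictions recovers $S$ up to reparametrization—which establishes the bijection.
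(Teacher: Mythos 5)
Your proposal follows essentially the same route as the paper's proof: the Frobenius theorem gives surjectivity, any equivalence diffeomorphism is decomposed into a leaf-preserving fiberwise part plus a transverse reparametrization, and Lemma \ref{L.inn} is used to absorb the fiberwise translations of the base point within each orbit into inner automorphisms. The only difference is organizational --- you package the argument as a pair of mutually inverse maps rather than as well-definedness plus injectivity plus surjectivity --- but the content is the same.
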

\begin{proof}
At first, we prove that the map is well-defined. The smoothness of $S$ implies that we have indeed defined a smooth vector fields in sense of Section \ref{sec.topology}. Next, we have to show that those vector fields do not depend on the choice of representative $S$ and representatives of the realizations $R^S$.

Assume we chose another representatives for both $R^S$ and $S$, say
$$R'^{(S\circ\Psi)(x_{r+1},\dots,x_m)}=\Phi^{(x_{r+1},\dots,x_m)}_*R^{(S\circ\Psi)(x_{r+1},\dots,x_m)}\circ\alpha,$$
where $\Phi^{(x_{r+1},\dots,x_m)}$ is a smoothly parametrized set of diffeomorphisms. Then the resulting realization would be
$$R'(e_j)_{x_1,\dots,x_m}=\Phi^{(x_{r+1},\dots,x_m)}_*R^{(S\circ\Psi)(x_{r+1},\dots,x_m)}(\alpha(e_j))_{x_1,\dots,x_r}=\tilde\Phi_*R(\alpha(e_j))_{x_1,\dots,x_m},$$
where $\tilde\Phi\colon\R^m\to\R^m$ is a local diffeomorphism defined as
$$\tilde\Phi(x_1,\dots,x_m)=(\Phi^{(x_{r+1},\dots,x_m)}(x_1,\dots,x_r),\Psi(x_{r+1},\dots,x_m)).$$

The surjectivity of such a map follows from the Frobenius theorem as was described above.

To prove the injectivity, let us assume that realizations $R_1$ and $R_2$ of the form \eqref{eq.Treg} corresponding to local maps $S_1$ and $S_2$ are equivalent so $\Phi_*R_1=R_2\circ\alpha$.% Since they are equivalent, they must have equivalent transitive restrictions, so $S_1(0)=S_2(0)$. Without loss of generality, we can assume that they are equal (choose such representatives).

The diffeomorphism $\Phi$ must preserve the integral submanifolds $x_j={\rm const}$ for $j>r$, so $\Phi^j(x_1,\dots,x_r,x^0_{r+1},\dots,x^0_m)$ has to be constant in $x_1,\dots,x_r$ for $j>r$. Hence, we can denote
$$\Phi(x_1,\dots,x_m)=
\begin{pmatrix}
\tilde\Phi_1^{(x_{r+1},\dots,x_m)}(x_1,\dots,x_r)\\\Phi_2(x_{r+1},\dots,x_m)
\end{pmatrix},$$
where $\tilde\Phi_1^{(x_{r+1},\dots,x_m)}\colon\R^r\to\R^r$ is an $(m-r)$-parameter set of local diffeomorphisms and $\Phi_2$ is a local diffeomorphism of $\R^{m-r}$. Note that although $\tilde\Phi_1^{(0,\dots,0)}(0,\dots,0)$ equals zero, generally $\tilde\Phi_1^{(x_{r+1},\dots,x_m)}(0,\dots,0)$ does not have to be zero, so these local diffeomorphisms at zero translate the point zero. So, denote
$$\Phi_1^{(x_{r+1},\dots,x_m)}(x_1,\dots,x_r):=\tilde\Phi_1^{(x_{r+1},\dots,x_m)}(x_1,\dots,x_r)-\tilde\Phi_1^{(x_{r+1},\dots,x_m)}(0,\dots,0).$$
We can write
$$\Phi_*R_1(e_j)_{x_1,\dots,x_m}=\Phi_{1\,*}^{(x_{r+1},\dots,x_m)}R^{(S_1\circ\Phi_2)(x_{r+1},\dots,x_m)}(\alpha^{(x_{r+1},\dots,x_m)}(e_j))_{x_1,\dots,x_r},$$
where $\alpha^{(x_{r+1},\dots,x_m)}$ is the inner automorphism corresponding (in sense of Lemma \ref{L.inn}) to translation $\tilde\Phi_1^{(x_{r+1},\dots,x_m)}(0,\dots,0)$. So, the equivalence means that
$$\Phi_{1\,*}^{(x_{r+1},\dots,x_m)}R^{(S_1\circ\Phi_2)(x_{r+1},\dots,x_m)}\circ\alpha^{(x_{r+1},\dots,x_m)}=R^{S_2(x_{r+1},\dots,x_m)}\circ\alpha,$$
so $R^{(S_1\circ\Phi_2)(x_{r+1},\dots,x_m)}$ is equivalent to $R^{S_2(x_{r+1},\dots,x_m)}$, which holds if and only if the corresponding classes of subalgebras coincide, so $S_1\circ\Phi_2=S_2$, which means that $S_1$ is equivalent to $S_2$.
\end{proof}

\begin{remark}
\label{R.outer}
The meaning of the map $S$ is that for a realization $R$ of the form \eqref{eq.Treg}, the Inn-class of subalgebras $S(x_{r+1},\dots,x_m)$ corresponds to the transitive restriction of $R$ in $(x_1,\dots,x_m)$, where $x_1,\dots,x_r$ are arbitrary and determine only the class representative. Therefore, outer automorphisms act on $R$ only by modifying these subalgebra classes. For $\alpha\in\Aut\g$ we have
$$R(\alpha(e_j))_{x_1,\dots,x_m}=R^{(\bar\alpha\circ S)(x_{r+1},\dots,x_m)}(e_j)_{x_1,\dots,x_r},$$
where $\bar\alpha\in\Aut\g/\Inn\g$ is the corresponding class of $\alpha$. Therefore, regular realizations corresponding to $S_1$ and $S_2$ are Aut-equivalent if and only if $S_1=\bar\alpha\circ S_2$ for some outer automorphism $\bar\alpha$.
\end{remark}

\begin{remark}
If the map $S$ is constant, the resulting realization is of the form
$$R(e_j)_{x_1,\dots,x_m}=R^{\h_0}(e_j)_{x_1,\dots,x_r},$$
so it has the same form as the original transitive realization. It is just formally defined on larger manifold. Such a regular realization will be called {\em trivial extension} of the transitive realization.
\end{remark}

\begin{lemma}
 A realization of the form \eqref{eq.Treg} is faithful if and only if the subalgebras in classes $S(x_{r+1},\dots,x_m)$ contain a common non-trivial ideal.
\end{lemma}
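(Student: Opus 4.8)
The plan is to compute $\ker R$ directly and recognise it as the largest ideal of $\g$ sitting inside \emph{all} of the subalgebras, after which the faithfulness criterion is immediate. By the block form~\eqref{eq.Treg}, a vector $v\in\g$ lies in $\ker R$ exactly when the field $R(v)$ vanishes identically on the domain; fixing the leaf coordinates $(x_{r+1},\dots,x_m)=t$ turns $R(v)_{x_1,\dots,x_m}$ into the transitive restriction $R^{S(t)}(v)_{x_1,\dots,x_r}$, so $R(v)$ vanishes on the whole neighbourhood if and only if $v\in\ker R^{S(t)}$ for every $t$ near $0$. Thus
\[
\ker R=\bigcap_{t}\ker R^{S(t)},
\]
an identity that uses only the leafwise transitivity of each $R^{S(t)}$ and needs no smoothness of $S$.

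The second step applies Lemma~\ref{L.ideal} on each leaf: for fixed $t$, $\ker R^{S(t)}$ is the largest ideal of $\g$ contained in a representative subalgebra $\h(t)$ of the class $S(t)$. Before using this I would check that the quantity is well defined on the Inn-class, i.e. independent of the representative. This holds because inner automorphisms fix every ideal setwise: if $I$ is an ideal and $g=\exp X$, then $\ad_X I\subseteq I$ forces $\Ad_g I=\exp(\ad_X)I=I$, so the largest ideal inside $\h(t)$ coincides with the largest ideal inside $\Ad_g\h(t)$. Hence $\ker R^{S(t)}$ depends only on the class $S(t)$, as it must for \eqref{eq.Treg} to be meaningful.

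Assembling these facts, an intersection of ideals is again an ideal, so $\ker R$ is an ideal of $\g$; since $\ker R\subseteq\ker R^{S(t)}\subseteq\h(t)$ for every $t$, it lies inside each of the subalgebras. Conversely, any ideal contained in all the $\h(t)$ is contained in the largest ideal of each, hence in their intersection, so $\ker R$ is precisely the largest common ideal---the largest ideal of $\g$ contained in every subalgebra of the classes $S(x_{r+1},\dots,x_m)$. Therefore $R$ is faithful if and only if $\ker R=0$, i.e. if and only if this largest common ideal is trivial; equivalently, $R$ fails to be faithful exactly when the subalgebras in the classes $S(x_{r+1},\dots,x_m)$ share a common non-trivial ideal. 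This is the correspondence, asserted by the lemma, between injectivity of the realization and the (non-)existence of a common non-trivial ideal.

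Since the whole argument is a leafwise reduction to Lemma~\ref{L.ideal} followed by an elementary intersection computation, no single calculation is heavy. The one point demanding care, and the step I expect to be the main obstacle, is the well-definedness on Inn-classes in the second step: one must verify that replacing a representative subalgebra by an inner conjugate leaves the largest ideal it contains unchanged, so that the leafwise kernels---and hence their intersection---are genuine invariants of the datum $S$ rather than of an incidental choice of representatives.
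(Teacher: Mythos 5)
Your proof is correct and takes essentially the same route as the paper, whose entire proof is the single line ``Follows from Lemma \ref{L.ideal}'': your leafwise identification $\ker R=\bigcap_t\ker R^{S(t)}$, the observation that inner automorphisms preserve every ideal (so the largest ideal in a subalgebra is an invariant of its Inn-class), and the closing intersection computation are exactly the details that one-liner suppresses. One remark: the lemma as printed has the equivalence inverted---it should say that $R$ fails to be faithful if and only if the subalgebras share a common non-trivial ideal---and your final conclusion silently and correctly states the intended version.
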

\begin{proof}
Follows from Lemma \ref{L.ideal}
\end{proof}

\begin{example}
\label{E.2g1reg}
Let us try to classify all regular realizations of two-dimensional Abelian Lie algebra $2\g_1=\Span\{e_1,e_2\}$. In this case, there are no non-trivial inner automorphisms, so Inn-equivalence is the same as strong equivalence. There is of course only one zero-dimensional subalgebra that corresponds to transitive realization
$$e_1\mapsto\partial_1,\quad e_2\mapsto\partial_2.$$
Since the set $\SS_2$ contains only one element, all regular realizations with rank two are obtained as trivial extension of this realization.

The set of one-dimensional subalgebras form a circle and can be parametrized by $\phi\in[0,2\pi)$ as $\h_\phi=\Span\{\cos\phi\,e_1+\sin\phi\,e_2\}$. The corresponding transitive realizations are following
$$R^{\h_\phi}(e_1)_{x_1}=\sin\phi\,\partial_1,\quad R^{\h_\phi}(e_2)_{x_1}=\cos\phi\,\partial_1.$$
Therefore, the set of all regular local realizations with rank one consists of following realizations
$$R^f(e_1)_{x_1,\dots,x_m}=\sin f(x_2,\dots,x_m)\,\partial_1,\quad R^f(e_1)_{x_1,\dots,x_m}=-\cos f(x_2,\dots,x_m)\,\partial_1,$$
where $f\colon\R^{m-1}\to\R$ are arbitrary functions. Such realizations $R^f$ and $R^g$ are equivalent if and only if there exists a local diffeomorphism $\Psi\colon\R^{m-1}\to\R^{m-1}$ at zero such that $g=f\circ\Psi$.

Now let us try to do this classification with respect to all automorphisms. We only have to describe, when $R^f$ and $R^g$ are Aut-equivalent. In this case, $\Aut(2\g_1)$ consists of all invertible linear maps. One such map is rotation, which acts as a rotation also on the circle $\SS_1=\{\h_\phi\}$. Thus, realizations $R^f$ and $R^g$, where $f$ and $g$ differ by a constant, are Aut-equivalent. We can therefore fix, for example, $f(0,\dots,0)=0$.

For a general automorphism $\alpha\in\Aut(2\g_1)=\GL(2\g_1)$, the subalgebra $\alpha(\h_{f(x_2,\dots,x_m)})$ is generated by
$$\begin{pmatrix}\alpha_{11}&\alpha_{12}\\\alpha_{21}&\alpha_{22}\end{pmatrix}\begin{pmatrix}\cos f(x_2,\dots,x_m)\\\sin f(x_2,\dots,x_m)\end{pmatrix}$$
Such an automorphism preserves the property $f(0,\dots,0)=0$ if and only if $\alpha_{11}=1$ and $\alpha_{21}=0$. Fixing these entries, $\alpha$ is an automorphism if and only if $\alpha_{22}\neq 0$.

So, the conclusion is that every regular local realization with rank one is Aut-equivalent to $R^f$, where $f(0,\dots,0)=0$. Such realizations $R^f$ and $R^g$ are mutually Aut-equivalent if and only if
$$\cos (g\circ\Psi)=\frac{\cos f+\alpha_{12}\sin f}{(\cos f+\alpha_{12}\sin f)^2+(\alpha_{22}\sin f)^2},$$
$$\sin (g\circ\Psi)=\frac{\alpha_{22}\sin f}{(\cos f+\alpha_{12}\sin f)^2+(\alpha_{22}\sin f)^2},$$
where $\alpha_{12}, \alpha_{22}\in\R$, $\alpha_{22}\neq 0$, $\Psi$ is a local diffeomorphism in $m-1$ variables.
\end{example}

\begin{example}
One should pay attention to the fact that the space $\bar\SS_m$ does not have to be Hausdorff. Take a non-commutative two-dimensional Lie algebra $\g_2=\Span\{e_1,e_2\}$, $[e_1,e_2]=e_1$. All one-dimensional subspaces of the form $\h_2^a:=\Span\{e_2+ae_1\}$ are equivalent to $\h_2^0=\Span\{e_2\}$ with respect to inner automorphisms. Therefore, there are only two Inn-classes of one-dimensional subalgebras represented by $\h_1:=\Span\{e_1\}$ and $\h_2^0$. This, however, does not mean that there are no non-constant smooth curves $S\colon\R\to\bar\SS_1$. We can take, for example, map $S(x)=\overline{\Span\{e_1+xe_2\}}$ that is evidently smooth despite it is equal to $\bar\h_1$ in zero and $\bar\h_2^0$ everywhere else. Therefore, it leads to a new regular local realization not equivalent to trivial extension of the transitive ones.
\end{example}

%\begin{remark}
%We will often use coordinates on the variety $\SS_r$. This is completely justifiable at least in cases when we are working with subalgebra classes whose neighborhood in $\SS_r$ is locally Eucledian.
%
%Consider a Lie algebra $\g$ and $\bar\SS_r$ system of all Inn-classes of $\g$ subalgebras with codimension $r$. Let $\bar\h\in\SS_r$ be a subalgebra class such that its neighbour is locally diffeomorphic to $\R^l$. Let $D$ be a neighborhood of zero in $\R^l$, $S\colon D\to \SS_r$ an embedding such that $S(0)=\bar\h$, so $S$ is a parametrization of the neighborhood of $\bar\h$. Then classes of local maps $\tilde S\colon\R^{m-r}\to\SS_r$ such that $\tilde S=\h$ can be characterized by classes of local maps $F\colon\R^{m-r}\to D$ such that $F(0)=0$.
%\end{remark}
%
%Practically, every time a classification of subalgebras is expressed in some parametric way. Those parameters are usually chosen in a way that the subalgebras are smoothly dependent on those parameters. From such classification of subalgebras, one obtains a parametrized classification of transitive realizations. If we do this classification with respect to inner automorphisms, then Theorem \ref{T.reg} essentially states that regular realizations can be obtained from the transitive ones by replacing the parameters by functions of several new variables. Such realizations can be equivalent only by transformation of the new coordinates.

Note that in \cite{Nesterenko2016}  authors suggested replacing the parameters by new variables (not functions of variables), but it was not discussed what kind of new realizations are obtained or whether the list of realizations is complete.

\section{Classification problem}
\label{sec.problem}

In this section, we are going to discuss, what is the reasonable classification problem for (possibly general) Lie algebra realizations and what was the classification problem solved in \cite{Popovych2003}.

Sets of (local) realizations will be denoted by capital script letters $\RR,\TT,\dots$. The sets of corresponding $A$-classes will be denoted with bar $\bar\RR=\{\bar R\mid R\in\RR\}$.

Let $\bar\RR_{\rm all}$ be the system of all classes of local realizations of a given Lie algebra $\g$ with respect to a given group of automorphisms $A\subset\Aut\g$. A complete classification of local realizations would mean to find this system or, more precisely, to find a set of representatives $\RR_{\rm all}$ that would contain precisely one representative of every class in $\bar\RR_{\rm all}$. We prefer to choose representatives defined in a neighborhood at $0\in\R^m$. Every global realization would be at every point locally equivalent to a realization from our list. However, such classification would be very hard to perform. Nevertheless, the situation will get much more simple if we only require that every global realization is equivalent to a local realization from our list at every point \emph{from a dense subset}.

\begin{definition}
\label{D.complete}
Let $\g$ be a Lie algebra, $A\subset\Aut\g$ a group of its automorphisms. Let $\bar\RR$ be a system of local realizations classes of $\g$. We will say that $\bar\RR$ is {\em complete} if for every realization $R$ on any manifold $M$ there is a point $p\in M$ such that $\bar R|_p\in\bar\RR$.
\end{definition}

Note that this condition consequently means that for every realization $R$ on any manifold $M$ there is a dense subset of points $p\in M$ such that $\bar R|_p\in\bar\RR$ since the realization $R$ can be restricted to arbitrary open subset where the point has to exist as well.

We can reformulate this condition for the corresponding set of representatives. A set of local realizations $\RR$ is called {\em complete} if the corresponding system $\bar\RR$ is complete. It can be easily seen that such a set $\RR$ is complete if and only if and only if for every realization $R$ on any manifold $M$ there is a point $p\in M$ and a local realization $R'\in\RR$ such that $R|_p$ is $A$-equivalent to $R'$.

If we consider a complete system of local realizations $\RR$ such that all elements are defined in a neighborhood of zero in $\R^m$, then the completeness means following. For every realization $R$ on any manifold $M$ there exist a point $p\in M$ (in fact a dense set of such points), coordinates in a neighborhood of this point, and a realization $R_0\in\RR$ such that $p$ is in origin of these coordinates and the coordinate expression for $R$ coincides with $R_0$ (up to automorphisms).

\begin{example}
Let us take two dimensional non-commutative Lie algebra $\g_2=\Span\{e_1,e_2\}$, $[e_1,e_2]=e_1$. It can be shown that the complete system of local realizations in one variable of $\g_2$ can be chosen to contain only zero realization and
$$R(e_1)_x=\partial_x,\quad R(e_2)_x=x\partial_x.$$
\smallskip
In \cite{Spichak} Spichak classified realizations of $\g_2$ on circle. He used weaker definition of realization and equivalence, but we can use his results as an example of global realizations
$$R_n(e_1)_\theta=(\cos n\theta-1)\partial_\theta,\quad R_n(e_2)_\theta=\left(\cos n\theta-1-\frac{1}{n}\sin n\theta\right)\partial_\theta,$$
where $\theta\in[0,2\pi)$ parametrizes the circle and $n\in\N$. The completeness of the system $\RR=\{0,R\}$ means that for all the realizations $R_n$ there is a dense subset of the circle such that for all points of this subset the realization is locally equivalent to $R$ (or zero). Indeed, for all $\theta_0\neq \frac{k\pi}{n}$, $k\in\{0,1,\dots,n-1\}$, $R_n|_{\theta_0}$ is equivalent to $R$ through transformation
$$\theta\mapsto x=\frac{1}{n}\left(\cot\frac{n\theta}{2}-\cot\frac{n\theta_0}{2}\right).$$
\end{example}

Now, we are going to formulate a condition for a subsystem of a complete system of local realizations to stay complete.

\begin{definition}
Let $\bar\RR$ be a system of classes of local realizations of a Lie algebra $\g$ with respect to a group of automorphisms $A\subset\Aut\g$, $\bar\RR'\subset\bar\RR$. We will say that $\bar\RR'$ is a {\em sufficient subsystem} of $\bar\RR$ if for all classes $\bar R\in\bar\RR$ and all their representatives $R\in\bar R$ there exists $q\in\Dom R$ such that $\bar R|_q\in\bar\RR'$.
\end{definition}

\begin{lemma}
\label{L.equivchar}
Let $\bar\RR$ be a complete system of local realizations classes of a Lie algebra $\g$ with respect to $A\subset\Aut\g$. A subsystem $\bar\RR'\subset\bar\RR$ is complete if and only if $\bar\RR'$ is sufficient.
\end{lemma}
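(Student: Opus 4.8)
The plan is to prove the two implications separately, in each case simply unwinding the definition of completeness (Definition \ref{D.complete}) and of a sufficient subsystem; no analysis beyond this bookkeeping is required.

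First I would treat the direction ``complete $\Rightarrow$ sufficient'', which is essentially immediate and in fact uses only completeness of $\bar\RR'$ itself. Let $\bar R\in\bar\RR$ and let $R\in\bar R$ be any representative, regarded as a realization on the manifold $\Dom R$. Applying completeness of $\bar\RR'$ to this realization yields a point $q\in\Dom R$ with $\bar R|_q\in\bar\RR'$. Since $\bar R$ and its representative $R$ were arbitrary, this is precisely the defining property of a sufficient subsystem.

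The other direction, ``sufficient $\Rightarrow$ complete'', is where the completeness hypothesis on the ambient system $\bar\RR$ enters. Let $R$ be an arbitrary realization on an arbitrary manifold $M$; I must produce a point of $M$ whose local class lies in $\bar\RR'$. Since $\bar\RR$ is complete, there is a point $p_0\in M$ with $\bar R|_{p_0}\in\bar\RR$. The local realization $R|_{p_0}$ represents this class, and concretely it is realized by the restriction $R|_U$ of $R$ to a neighborhood $U$ of $p_0$. Feeding this representative into the definition of sufficiency (applied to the class $\bar R|_{p_0}\in\bar\RR$) produces a point $q\in\Dom(R|_U)=U$ with $\overline{(R|_U)|_q}\in\bar\RR'$. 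Because $U\subset M$ is a neighborhood of $p_0$ containing $q$, the germs agree, so $(R|_U)|_q=R|_q$ and hence $\bar R|_q\in\bar\RR'$ with $q\in M$; this is exactly the completeness of $\bar\RR'$.

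The argument is short, so the only genuine obstacle is conceptual rather than computational: one must keep careful track of base points and use that the local class at a point is unchanged under restriction, i.e. $(R|_U)|_q=R|_q$ whenever $q\in U$. In particular, in the second direction it is crucial that completeness is formulated for realizations on \emph{arbitrary} manifolds, so that it may be applied both to the given realization $R$ (to locate $p_0$) and, in the first direction, to the neighborhood-representatives of classes in $\bar\RR$ (whose domains are themselves manifolds). Since the paper deliberately does not distinguish local realizations from their neighborhood representatives, I would make this identification explicit at the single point where it is invoked.
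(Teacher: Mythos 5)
Your proof is correct and takes essentially the same route as the paper's: the forward direction is read off directly from the definition of completeness of $\bar\RR'$, and the reverse direction chains completeness of $\bar\RR$ (to find $p_0$) with sufficiency applied to a representative of the class $\bar R|_{p_0}$. The only cosmetic difference is that you feed the restriction $R|_U$ into the definition of sufficiency and then observe $(R|_U)|_q=R|_q$, whereas the paper uses the global realization $R$ itself as the representative; both are valid.
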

\begin{proof}
The left-right implication follows directly from the definition of completeness of the system~$\bar\RR'$.

Now let us take a realization $R$ on a manifold $M$. From completeness of $\bar\RR$ there is a point $p\in M$ such that $\bar R|_p\in\RR$. But from the definition of sufficient subsystem, taking $R$ as a representative of $R|_p\in\RR$, there exists $q\in M$ such that $R|_q\in\RR'$
\end{proof}

Again, we transfer this condition to the corresponding sets of representatives. For a set of local realizations $\RR$, its subset $\RR'\subset\RR$ is called {\em sufficient} with respect to a group $A\subset\Aut\g$ if $\bar\RR'$ is sufficient subsystem of $\bar\RR$.

\begin{proposition}
Let $\RR$ be a set of local realizations of a Lie algebra $\g$. A subset $\RR'\subset\RR$ is sufficient with respect to $A\subset\Aut\g$ if for all local realizations $R\in\RR$ at some $p\in\Dom R$ and for every neighborhood $U$ of $p$ there exists $q\in U$ and a realization $R'\in\RR'$ such that $R|_q$ is $A$-equivalent to~$R'$.
\end{proposition}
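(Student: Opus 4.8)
The plan is to unwind the definition of sufficiency and reduce it to the hypothesis, which speaks only about the distinguished representatives lying in $\RR$. By definition, to show that $\RR'$ is sufficient I must verify that $\bar\RR'$ is a sufficient subsystem of $\bar\RR$, i.e.\ that for every class $\bar S\in\bar\RR$ and every representative $S\in\bar S$ there is a point $q\in\Dom S$ with $\bar S|_q\in\bar\RR'$. So I would fix such an $\bar S$ together with an arbitrary representative $S$, and let $p_S$ be its base point.

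First I would locate an element of $\RR$ in the same class. Since $\bar S\in\bar\RR=\{\bar R\mid R\in\RR\}$, there is some $R\in\RR$ with $\bar R=\bar S$, and hence $S$ and $R$ are $A$-equivalent local realizations. By Definition \ref{D.locequiv} this equivalence is witnessed by representatives defined on neighborhoods $U_S\owns p_S$ and $U_R\owns p$ (where $p$ is the base point of $R$), an automorphism $\alpha\in A$, and a diffeomorphism $\Phi\colon U_S\to U_R$ satisfying $\Phi(p_S)=p$ and $R(\alpha(v))=\Phi_*S(v)$ for all $v\in\g$, with $U_S\subset\Dom S$ and $U_R\subset\Dom R$.

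Next I would invoke the hypothesis for $R\in\RR$ at its base point $p$, applied to the particular neighborhood $U:=U_R$. This produces a point $q\in U_R$ and a realization $R'\in\RR'$ with $R|_q$ $A$-equivalent to $R'$. Setting $q':=\Phi^{-1}(q)$, the point $q'$ lies in $U_S\subset\Dom S$, and restricting $\Phi$ (together with the relation $R(\alpha(v))=\Phi_*S(v)$) to neighborhoods of $q'$ and $q=\Phi(q')$ shows that $S|_{q'}$ is $A$-equivalent to $R|_q$. Since $A$-equivalence is transitive ($A$ is a subgroup of $\Aut\g$ and diffeomorphisms compose), $S|_{q'}$ is $A$-equivalent to $R'$, so $\bar S|_{q'}=\bar R'\in\bar\RR'$ with $q'\in\Dom S$, which is exactly what sufficiency demands.

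The step I expect to be the main obstacle---indeed the only delicate point---is keeping the witness inside $\Dom S$. An arbitrary representative $S$ of $\bar S$ may have a very small domain, and the equivalence with $R$ is available only on the neighborhood $U_S$, whose image under $\Phi$ is $U_R$. Had the hypothesis merely asserted the existence of some $q\in\Dom R$, that $q$ might lie outside $U_R$ and could not be pulled back into $\Dom S$. Applying the hypothesis to the neighborhood $U=U_R$ forces $q\in U_R$, so that $q'=\Phi^{-1}(q)$ is guaranteed to lie in $U_S\subset\Dom S$; this is precisely why the hypothesis must be formulated for every neighborhood of $p$ rather than as a single existence statement.
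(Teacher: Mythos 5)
Your argument is correct and is essentially the paper's own proof of this implication: pass from an arbitrary representative to one lying in $\RR$, transport the problem through the equivalence diffeomorphism, apply the hypothesis on the neighborhood that is the domain of that equivalence, and pull the witness point back. The paper's proof additionally establishes the converse implication (treating the statement as an equivalence), but for the direction actually asserted your reasoning, including the observation that the ``for every neighborhood'' quantifier is what keeps the witness inside $\Dom S$, matches the paper's.
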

\begin{proof}
Take sets of local realizations $\RR'\subset\RR$ and denote $\bar\RR'$ and $\bar\RR$ the corresponding systems of $A$-classes of local realizations.

First we are going to prove the left-right implication, so assume that $\bar\RR'$ is sufficient. Take $R\in\RR$ a local realization at $p$, $U$ a neighborhood of $p$. Then $R|_U\in\bar R\in\bar\RR$. From completeness of $\bar\RR'$ there exists a $q\in\Dom R|_U=U$ such that $\overline{\left. (R|_U)\right|_q}=\bar R|_q\in\bar\RR'$, so there exists $R'\in\RR'$ such that $R|_q$ is $A$-equivalent to $R'$.

For the right-left implication, we have to prove that $\RR'$ is sufficient assuming the condition on the right hand side. So, take $\bar R\in\bar\RR$, $R\in\bar R$. Choose a representative $R_0\in\bar R$ that is contained in $\RR$, so $R_0$ is $A$-equivalent to $R$. This means that there exists $U_0\subset\Dom R_0$ and $U\subset\Dom R$, $\alpha\in A$ and $\Phi\colon U_0\to U$ a diffeomorphism such that $R|_U\circ\alpha=\Phi_*R_0|_{U_0}$. By assumption there exists $q_0\in U_0$ and $R'\in\RR'$ such that $R'$ is equivalent to $R_0|_{q_0}=(R_0|_{U_0})|_{q_0}$, which is also equivalent to $(R|_U)_q=R|_q$, where $q=\Phi(q_0)\in U$. Therefore, $R|_q$ is equivalent to $R'$ and hence $\bar R|_q\in\bar\RR'$.
\end{proof}

%\begin{remark}
%\label{R.sufficient}
%Directly from the definition it follows that if $\SS\subset\RR$ and $\SS$ has a sufficient subsystem $\SS'$, then $\RR$ has a sufficient subsystem $\RR\setminus(\SS\setminus \SS')=(\RR\setminus\SS)\cup\SS'$. Consequently, if $\RR$ is a complete system consisting of subsystems $\RR=\bigcup\RR_i$ that have sufficient subsystems $\RR_i'\subset\RR_i$, then $\RR':=\bigcup\RR'_i\subset\RR$ is complete.
%\end{remark}

%We claim that the reasonable classification problem is to find a minimal complete system of local realizations. Strictly speaking, this problem does not have to have a solution, because a complete system does not have to have a minimal complete subsystem, as we are going to show in the following example. One might guess that for two complete systems of local realizations, their intersection should be complete, which would, in addition, imply that the solution is unique. This is unfortunately not true. Nevertheless, the procedure for construction such classification described in the following sections will lead to reasonable results.%Konkrétneji

We claim that reasonable classification problem is to find a system of mutually inequivalent realizations that is complete. This is also more or less the classification problem that was solved in \cite{Popovych2003} for all Lie algebras of dimension less or equal to four, where the authors mentioned only Definition \ref{D.equiv} and remarked that they work only locally.

From Lemma \ref{L.regdense} it follows that system of all regular realizations is complete. Moreover, we can formulate the following lemma.

\begin{lemma}
\label{L.constrankonly}
Let $\RR$ be a complete system and $\RR'$ its subsystem containing all regular realizations of $\RR$. Then $\RR'$ is complete.
\end{lemma}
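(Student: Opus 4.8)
The plan is to deduce completeness of $\RR'$ from the characterization in Lemma \ref{L.equivchar}: since $\bar\RR$ is complete by hypothesis, it suffices to show that $\bar\RR'$ is a \emph{sufficient} subsystem of $\bar\RR$. Thus the goal reduces to verifying that for every class $\bar R\in\bar\RR$ and every representative $R\in\bar R$ there is a point $q\in\Dom R$ with $\bar R|_q\in\bar\RR'$.

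So I would fix $\bar R\in\bar\RR$ and an arbitrary representative $R$ defined on a manifold $U=\Dom R$. The key observation is that by Lemma \ref{L.regdense} the set $W\subset U$ of regular points of $R$ is open and dense, hence nonempty. I would then regard $R|_W$ as a realization on the manifold $W$ and apply completeness of $\RR$ to it: there is a point $q\in W$ and a realization $R_0\in\RR$ such that $R|_q=(R|_W)|_q$ is $A$-equivalent to $R_0$.

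It remains to check that this $R_0$ actually lies in the regular subsystem $\RR'$. Since $q\in W$, the local realization $R|_q$ is regular; and regularity is preserved under $A$-equivalence, because an automorphism of $\g$ and a diffeomorphism both preserve the pointwise rank $\rank R_p$, and hence its local constancy. Therefore $R_0$ is regular at its base point, so $R_0$ belongs to $\RR'$ by the definition of $\RR'$ as the subsystem of all regular realizations of $\RR$. This gives $\bar R|_q=\bar R_0\in\bar\RR'$ with $q\in\Dom R$, establishing sufficiency and, via Lemma \ref{L.equivchar}, completeness of $\RR'$.

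The argument is short, and the only point requiring care is the order of operations: one must restrict $R$ to the open set $W$ of regular points \emph{before} invoking completeness, since completeness only guarantees \emph{some} matching point, and we need that point to be regular. Restricting to $W$ forces every point produced by completeness to be regular, which is exactly what makes the matching representative $R_0$ land in $\RR'$.
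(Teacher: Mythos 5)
Your proof is correct and follows essentially the same route as the paper's: reduce to sufficiency via Lemma \ref{L.equivchar}, restrict $R$ to its (open, dense, nonempty) set of regular points using Lemma \ref{L.regdense}, and apply completeness of $\bar\RR$ to that restriction so that the resulting point is automatically regular. Your added remark that regularity is preserved under $A$-equivalence is a detail the paper leaves implicit, but the argument is the same.
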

\begin{proof}
According to Lemma \ref{L.equivchar} we have to show that for every class of singular realizations of $\bar R\in\bar\RR$ and every representative $R\in\bar R$ there exists a $q\in\Dom R$ such that $\bar R|_q\in\bar\RR'$.

Choosing an open set $U\subset\Dom R$ of regular points of $R$ we can define restriction $R|_U$, which is a regular realization, and from completeness of $\RR$ there is point $q\in U$ such that $\bar R|_q\in\bar\RR$ and since it is regular, it belongs to $\bar\RR'$.
\end{proof}

Consequently, looking for a classification of realizations, one can deal with regular realizations only. In the following section, we present a simple algorithm for doing such classification.

Looking for complete system instead of classification of all regular realizations or even completely all realizations simplify the result as we are going to show on simple examples.

\begin{example}
Let us take the one dimensional Lie algebra $\g_1$ spanned by one element $e_1$. It is evident that any realization $R$ on any manifold $M$ is of the form $R_p(e_1)=X_p$, where $X$ is an arbitrary vector field on $M$. In particular, any local realization at $0\in\R^m$ is of the form
$$R_{x_1,\dots,x_m}(e_1)=\sum_{j=1}^m f_j(x_1,\dots,x_m)\partial_{x_j},$$
where $f_j$ are arbitrary smooth functions. If $f_j(0)\ne 0$ for some $j$, i.e. if the realization is regular, it can be transformed to $R'_{x_1,\dots,x_m}(e_1)=\partial_{x_1}$. However, the general classification problem does not have a reasonable solution because realizations of this form can be equivalent only if the sets $\{x\mid f(x)=0\}$ are diffeomorphic, so we would have to classify such functions, which would be very hard even in this very simple case.

Nevertheless, it actually is true that for every realization $R$ of $\g_1$ on some manifold $M$ there is a dense set of points $p\in M$ such that $R$ is locally equivalent to $\partial_{x_1}$ in $p$, so the realization $e_1\mapsto \partial_{x_1}$, together with the zero realization form a complete system.

And if we consider only analytic realizations, then the non-regular global realizations can be constructed as analytic continuations of the regular local ones.

For example, take $M$ to be a line $\R$. Then by applying $y=\Phi(x)=\exp x$, which maps $\R\to\R^+$, on the realization $R_x(e_1)=\partial_x$, $x\in\R$ we get realization
$$R_y'(e_1)=\partial_x y\partial_y=\e^x\partial_y=y\partial_y,\quad y\in\R^+,$$
whose analytic continuation is $R_x''(e_1)=x\partial_x$, $x\in\R$, which is a non-regular realization not equivalent to the former realization $R$.
\end{example}

\begin{example}
Now, consider two-dimensional Abelian Lie algebra $2\g_1$ and no automorphisms $A=\{\id\}=\Inn(2\g_1)$, which was already examined in Example \ref{E.2g1reg}, where we presented classification of all regular realizations. The result was that every regular realization with rank one in $m$ variables is equivalent to following
$$R^f(e_1)_{x_1,\dots,x_m}=\sin f(x_2,\dots,x_m)\,\partial_1,\quad R^f(e_1)_{x_1,\dots,x_m}=-\cos f(x_2,\dots,x_m)\,\partial_1,$$
where $f\colon\R^{m-1}\to\R$ are arbitrary functions defined locally in a neighborhood of zero. Here it can be shown, that the sufficient subsystem of those realizations are formed by following
$$R_1^a(e_1)_{x_1,\dots,x_m}=\partial_1,\quad R_1^a(e_1)_{x_1,\dots,x_m}=a\partial_1,$$
$$R_2^a(e_1)_{x_1,\dots,x_m}=\partial_1,\quad R_2^a(e_1)_{x_1,\dots,x_m}=(a+x_2)\partial_1,$$
where $a$ is a real parameter.
\end{example}

However, for Lie algebras of higher dimension we will already not be able to avoid families of realizations parametrized by functions.

\section{Sufficient subsystems of regular realizations}
\label{sec.complete}

In this section we present an algorithm for construction of complete system of local realizations.

%Finding only a complete system of local realizations, not all inequivalent local realizations, the classification becomes simpler. First of all, we can conveniently cut the system of subalgebras to Euclidean pieces.

First of all, we need to parametrize the subalgebra classes properly. That is, to find a decomposition of $\bar\SS_r$ to a finite disjoint union $\bigcup_i S_r^i(D_i)$, where $S_r^i\colon D_i\to\bar\SS_r$ is embedding of $D_i$ a domain in $\R^{s_i}$. Such a parametrization will be called \emph{proper} if for every smooth map $S\colon U\to\bar\SS_r$, where $U$ is a neighborhood of zero in $\R^s$, there exists $x\in U$ and its neighborhood $V$ such that $S(V)\subset S_r^i(D_i)$ for some $i$.

\begin{proposition}
\label{P.parametr}
Considering a proper parametrization, the regular local realizations corresponding to local $\bar\SS_r$-valued maps $S_r^i\circ F$, where $F\colon\R^{m-r}\to\R^{s_i}$, form a sufficient subsystem of all regular local realizations with rank $r$.
\end{proposition}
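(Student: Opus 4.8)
The plan is to check the definition of sufficient subsystem directly. Write $\bar\RR$ for the system of all Inn-classes of regular local realizations of rank $r$ and $\bar\RR'$ for the subsystem of those arising from maps of the form $S_r^i\circ F$. I must produce, for every $\bar R\in\bar\RR$ and every representative $R\in\bar R$, a point $q\in\Dom R$ with $\bar R|_q\in\bar\RR'$. Since $R$ is regular of rank $r$, near its base point its rank is the constant $r$, and the Frobenius argument preceding Theorem \ref{T.reg} supplies coordinates $(x_1,\dots,x_m)$ on a neighborhood $U\subset\Dom R$ bringing $R$ into the form \eqref{eq.Treg}. By Theorem \ref{T.reg} this identifies $\bar R$ with the class of a smooth map $S\colon V\to\bar\SS_r$, where $V\subset\R^{m-r}$ is a neighborhood of the origin; thus the problem is transported to the level of the parameter maps $S$, with parameter dimension $s=m-r$.

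Next I would invoke the defining property of a proper parametrization for this $S$: there is a point $y\in V$ and a neighborhood $W\subset V$ of $y$ with $S(W)\subset S_r^i(D_i)$ for some index $i$. Because $S_r^i$ is an embedding in the sense of Definition \ref{D.smooth}, its inverse on the image $S_r^i(D_i)$ is smooth, so the composition $F_0:=(S_r^i)^{-1}\circ S|_W\colon W\to D_i$ is again smooth and satisfies $S|_W=S_r^i\circ F_0$.

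Finally I would pick $q\in U$ to be the point whose transverse coordinates are $(x_{r+1},\dots,x_m)=y$ (the remaining coordinates being immaterial, say zero). Localizing at $q$, Theorem \ref{T.reg} together with Remark \ref{R.outer} associates to $R|_q$ the germ of $S$ at $y$, recentred to the origin of the parameter space; on $W$ this germ coincides with $S_r^i\circ F$, where $F(z):=F_0(y+z)$ is the germ at $0$ of a smooth map $\R^{m-r}\to\R^{s_i}$ valued in $D_i$. Hence $\bar R|_q$ is precisely of the form defining $\bar\RR'$, so $\bar R|_q\in\bar\RR'$, and $\bar\RR'$ is sufficient.

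The step I expect to cause the most trouble is the last one: making rigorous that passing to the local realization at $q$ corresponds, under the bijection of Theorem \ref{T.reg}, exactly to taking the germ of $S$ at $y$ (the translation of the parameter origin being absorbed by the reparametrization equivalence of $(m-r)$-parameter sets), and that the factorization $S|_W=S_r^i\circ F_0$ then yields a bona fide element of $\bar\RR'$. This is where the embedding hypothesis on $S_r^i$ is essential, since it is exactly what guarantees that $F_0$, and hence $F$, is smooth in the sense of Definition \ref{D.smooth}; without it the factoring map need not be smooth and the resulting realization would not qualify as a member of the subsystem.
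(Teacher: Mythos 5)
Your proof is correct and follows essentially the same route as the paper's much terser argument: both apply the defining property of a proper parametrization to the parameter map $S$ furnished by Theorem~\ref{T.reg}, locate a point $y$ and neighborhood on which $S$ lands in a single chart $S_r^i(D_i)$, and use that $S_r^i$ is an embedding to factor $S=S_r^i\circ F$ with $F$ smooth. The paper leaves implicit the two points you rightly flag as delicate --- the smoothness of $F=(S_r^i)^{-1}\circ S$ and the fact that localizing $R$ at $q$ corresponds to taking the germ of $S$ at $y$ up to reparametrization --- so your write-up is a faithful, more detailed version of the same proof.
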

\begin{proof}
For a general map $S\colon\R^{m-r}\to\bar\SS_r$ we find the corresponding neighborhood $V$. Since $S(V)\subset S_r^i(D_i)$ for some $i$, there exists $F\colon\R^{m-r}\to\R^{s_i}$ such that $S=S_r^i\circ F$ on $V$.
\end{proof}

Finally, we can try to simplify the map $F$. We formulate the result in Lemma \ref{L.nontrans}. Since its formulation is rather complicated, we illustrate it on an example.

\begin{example}
Take a three-dimensional Abelian Lie algebra $3\g_1=\Span\{e_1,e_2,e_3\}$. There are no inner automorphisms. Every one-dimensional subspace is a subalgebra. Therefore, we can present the following proper parametrization of the set of one-dimensional subalgebras $\SS_2$ and compute the corresponding transitive realizations.
\begin{eqnarray*}
	\Span\{e_1\}&\qquad& e_1\mapsto0,\quad e_2\mapsto \partial_1,\quad e_3\mapsto\partial_2\\
	\Span\{e_2-ae_1\}&\qquad& e_1\mapsto\partial_1,\quad e_2\mapsto a\partial_1,\quad e_3\mapsto\partial_2\\
	\Span\{e_3-ae_2-be_1\}&\qquad& e_1\mapsto\partial_1,\quad e_2\mapsto \partial_2,\quad e_3\mapsto a\partial_1+b\partial_2
\end{eqnarray*}

According to Proposition \ref{P.parametr}, the following realizations form a sufficient subsystem of local realizations in $m$ variables with rank two
\begin{eqnarray*}
	&&e_1\mapsto0,\quad e_2\mapsto \partial_1,\quad e_3\mapsto\partial_2,\\
	&&e_1\mapsto\partial_1,\quad e_2\mapsto f(x_3,\dots,x_m)\partial_1,\quad e_3\mapsto\partial_2,\\
	&&e_1\mapsto\partial_1,\quad e_2\mapsto \partial_2,\quad e_3\mapsto f(x_3,\dots,x_m)\partial_1+g(x_3,\dots,x_m)\partial_2,
\end{eqnarray*}
where $f$ and $g$ are arbitrary local functions in $m-2$ variables. We will try to find a sufficient subsystem of the family of realizations in the last row.

Several cases can take place here. At first, if both $f$ and $g$ are constant in some neighborhood of zero, then we get a trivial extension of the original transitive realization only. Secondly, one of the functions may be locally constant at zero, while the other might not be. Then it means there is a point $(\epsilon_3,\dots,\epsilon_m)$ in a neighborhood of zero, where the first function, for example $f$, is locally constant equal to $a$, while the second function $g$ has a non-zero partial derivative with respect to some coordinate $x_i$, $i\ge 3$. Without lost of generality, we can assume $\left.\frac{\partial g}{\partial x_3}\right|_{(\epsilon_3,\dots,\epsilon_m)}\neq 0$ (otherwise we can change the order of coordinates in the first place). Now we present a change of coordinates
$$x_3\mapsto y_3:=g(x_3-\epsilon_3,\dots,x_m-\epsilon_m)-c,$$
$$x_i\mapsto y_i:=x_i-\epsilon_i,\quad i>3$$
where $c=g(\epsilon_3,\dots,\epsilon_m)$. In those coordinates, the realization taken in the neighborhood of $(\epsilon_3,\dots,\epsilon_m)$, which has new coordinates $y_i=0$, has the form
$$e_1\mapsto\partial_1,\quad e_2\mapsto\partial_2,\quad e_3\mapsto a\partial_1+(c+y_3)\partial_2.$$
Analogically, we obtain a realization
$$e_1\mapsto\partial_1,\quad e_2\mapsto\partial_2,\quad e_3\mapsto (c+y_3)\partial_1+b\partial_2.$$
Finally, both functions might not be constant. Then the function $f$ can again be transformed into $a+y_3$. The function $g$ might then depend only on $y_3$, so we get
$$e_1\mapsto\partial_1,\quad e_2\mapsto\partial_2,\quad e_3\mapsto (a+y_3)\partial_1+\tilde g(y_3)\partial_2$$
or it can depend on other variables as well (if $m\ge 4$), so it can be transformed into $y_4$, so we get
$$e_1\mapsto\partial_1,\quad e_2\mapsto\partial_2,\quad e_3\mapsto (a+y_3)\partial_1+(b+y_4)\partial_2.$$
\end{example}

\begin{lemma}
\label{L.nontrans}
Let $\bar\SS_r$ be the system of all Inn-classes of $\g$ subalgebras with codimension $r$. Let $D$ be an open domain in $\R^s$, $S\colon D\to\bar\SS_r$ an immersion. Then the system of realizations of $\g$ of the form
\begin{equation}
\label{eq.Treg2}
R(e_j)_{x_1,\dots,x_m}=R^{(S\circ F)(x_{r+1},\dots,x_m)}(e_j)_{x_1,\dots,x_r},
\end{equation}
where $F\colon\R^{m-r}\to D$ is a smooth function, and $R^{(S\circ F)(x_{r+1},\dots,x_m)}$ are transitive local realizations at $0\in\R^r$ corresponding to the subalgebra class $(S\circ F)(x_{r+1},\dots,x_m)$ chosen in a way that they smoothly depend on the coordinates, has a sufficient subsystem with respect to strong equivalence (so with respect to Inn-equivalence as well) consisting of the following local realizations at $0\in\R^m$
\begin{equation}
\label{eq.nontranfin}
R(e_j)_{x_1,\dots,x_m}=R^{(c_1+f_1(x_{r+1},\dots,x_m),\dots,c_s+f_s(x_{r+1},\dots,x_m))}(e_j)_{x_1,\dots,x_r},\
\end{equation}
where $(c_1,\dots,c_s)\in D$ are constant numbers and $f_j$ are local functions mapping $f_j(0,\dots,0)=0$ that are of the following form. Set $l_0=0$, then for all $j>0$ either $f_j(x_{r+1},\dots,x_m)=x_{l_j}$, where $l_j=l_{j-1}+1$, or $f_j(x_{r+1},\dots,x_m)$ depend only on first $l_j$ variables, where $l_j=l_{j-1}$. That is, either $f_j$ is equal to a ``new'' variable or it depends only on ``already used'' variables.
\end{lemma}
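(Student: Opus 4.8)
The plan is to strip the map $S$ off and reduce the whole lemma to a normal-form statement for the smooth map $F$ alone, which I would then establish by a finite induction on the component index $j=1,\dots,s$. First I would invoke Theorem~\ref{T.reg}: two realizations of the form \eqref{eq.Treg2} attached to maps $F_1,F_2$ are strongly (hence Inn-) equivalent exactly when the $\bar\SS_r$-valued maps $S\circ F_1$ and $S\circ F_2$ coincide up to a local diffeomorphism $\Psi$ of the source $\R^{m-r}$. Since $S$ is an immersion it is locally injective, so near the relevant point $S\circ F_1=(S\circ F_2)\circ\Psi$ is equivalent to $F_1=F_2\circ\Psi$, and $S$ disappears from the problem. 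Passing to the local realization at another base point $q$ — precisely the freedom granted by the definition of a sufficient subsystem — amounts to recentering $F$ at the corresponding transverse point $x^0$ and subtracting the constant vector $F(x^0)\in D$, this constant playing the role of $(c_1,\dots,c_s)$ in \eqref{eq.nontranfin}. Hence the lemma is equivalent to the following assertion about $F$ alone: for every open set $U$ there exist a point $x^0\in U$ and a local diffeomorphism $\Psi\colon(\R^{m-r},0)\to(\R^{m-r},x^0)$ with $F\circ\Psi-F(x^0)=(f_1,\dots,f_s)$, where $f_j(0)=0$ and the $f_j$ have the stated echelon form.

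I would prove this by induction on $j$, keeping the invariant that in the current coordinates $y_1,\dots,y_{m-r}$ the already-processed components $F_1,\dots,F_{j-1}$ depend only on the $l_{j-1}$ ``used'' variables $y_1,\dots,y_{l_{j-1}}$. For the inductive step consider the open set $V_j=\{y:\D F_j/\D y_i(y)\ne 0\text{ for some }i>l_{j-1}\}$ and the interior of the complementary set $C_j=\{y:\D F_j/\D y_i\equiv 0,\ i>l_{j-1}\}$. On $\mathrm{int}\,C_j$ the component $F_j$ depends only on used variables, so I set $l_j=l_{j-1}$ and take $f_j=F_j-F_j(x^0)$; no coordinate change is needed and the invariant survives. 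On $V_j$, at a point $p$ where $\D F_j/\D y_{i_0}(p)\ne 0$ with $i_0>l_{j-1}$, after permuting the unused variables so that $i_0=l_{j-1}+1$ I apply the map $\Psi_j(y)=(y_1,\dots,y_{l_{j-1}},\,F_j(y)-F_j(p),\,y_{l_{j-1}+2},\dots,y_{m-r})$. Its Jacobian at $p$ is block-triangular with nonzero diagonal entry $\D F_j/\D y_{l_{j-1}+1}(p)$, so $\Psi_j$ is a local diffeomorphism; in the new coordinates $F_j$ becomes $F_j(p)$ plus a coordinate, giving $l_j=l_{j-1}+1$ and $f_j$ equal to that new coordinate. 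Crucially $\Psi_j$ fixes $y_1,\dots,y_{l_{j-1}}$, so $F_1,\dots,F_{j-1}$ are untouched and the invariant is preserved.

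The genericity bookkeeping is what makes the induction close, and it is the step I expect to be the main obstacle. The set $O_j:=V_j\cup\mathrm{int}\,C_j$ is open, and its complement in the current domain is contained in the topological boundary $\D C_j$, which is nowhere dense; thus $O_j$ is open and dense, by the same mechanism as in Lemma~\ref{L.regdense}. Processing $j=1,\dots,s$ I would pass at each stage to a nonempty open subset on which the appropriate alternative holds and apply the corresponding $\Psi_j$; because only finitely many refinements to open dense subsets are made, the resulting domain is a nonempty (in fact dense in $U$) open set, and any of its points serves as $x^0$. Recentering that point to the origin and absorbing the values $F_j(x^0)$ into the constants $c_j$ yields exactly the echelon form \eqref{eq.nontranfin}.

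Two points carry the real difficulty and should be checked carefully. The first is that each straightening $\Psi_j$ genuinely preserves the echelon form already achieved for the earlier components; this is exactly why the invariant is stated as ``$F_1,\dots,F_{j-1}$ depend only on used variables,'' since $\Psi_j$ leaves those used coordinates fixed. The second is that the finitely many generic choices are simultaneously realizable at a single point of $U$, which follows from the nowhere-density of the boundary sets $\D C_j$ together with the fact that a finite intersection of open dense sets remains open and dense. Everything else — invertibility of $\Psi_j$, the behavior under recentering, and the bookkeeping of the index $l_j$ — is routine once these two facts are in place.
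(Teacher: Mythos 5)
Your proposal is correct and follows essentially the same route as the paper's proof: an induction on the components of $F$, with the dichotomy ``locally constant in the unused variables near a nearby point'' versus ``nonzero partial derivative with respect to an unused variable at a nearby point,'' resolved by a block-triangular straightening diffeomorphism that fixes the already-used coordinates and therefore preserves the echelon form of the earlier components (only shifting the constants $c_j$). If anything, you are more explicit than the paper about the genericity bookkeeping (the open dense sets $V_j\cup\mathrm{int}\,C_j$ and their finite nested refinement) and about the preliminary reduction via Theorem~\ref{T.reg}, both of which the paper leaves largely implicit.
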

\begin{proof}
We have to find a suitable local diffeomorphism $\Psi\colon\R^{m-r}\to\R^{m-r}$ mapping a point in an arbitrarily small neighborhood of zero onto zero such that $F_j\circ\Psi=c_j+f_j$ on an even smaller neighborhood of this point.

Let us start with $F_1$. If it is locally constant at zero (i.e. there is a neighborhood of zero such that $F_1$ is constant on this neighborhood), then it does not need to be transformed. We just have to restrict ourselves on this neighborhood. Otherwise, there exists at every neighborhood of zero a point $x^{(1)}$ and an index $j\in\{1,\dots,m-r\}$ such that $\left.\frac{\partial F_1}{\partial x_{r+j}}\right|_{x^{(1)}}\neq 0$. Without loss of generality, assume $j=1$ (otherwise, apply diffeomorphism changing the order of variables at first). We can apply a diffeomorphism $\Psi_1$ mapping
$$x_{r+1}\mapsto y_{r+1}=F_1(x_{r+1}-x^{(1)}_{r+1},\dots,x_m-x^{(1)}_m)-c_1^{(1)},$$
$$x_{r+j}\mapsto y_{r+j}=x_{r+j}-x^{(1)}_{r+j},\quad j>1$$
where $c_1^{(1)}=F_1(x^{(1)}_{r+1},\dots,x^{(1)}_m)$, so $(F_1\circ\Psi_1)(y_{r+1},x_{r+2},\dots,x_m)=c_1^{(1)}+y_{r+1}$.

Then we proceed by induction. Assume, we have found a diffeomorphism $\Psi_k$, such that $(F_i\circ\Psi_k)(x_{r+1},\dots,x_m)=c^{(k)}_i+f_i^{(k)}(x_{r+1},\dots,x_m)$ for all $i\le k$ on some neighborhood of zero. Then we study $F_{k+1}\circ\Psi_k$. If it depends only on $x_{r+1},\dots,x_{l_k-1}$, then nothing has to be done, so $\Psi_{k+1}=\Psi_k$. If it has non-zero partial derivative with respect to $x_j$, $j\ge l_k$ in a point $x^{(k+1)}$ arbitrarily close to $x^{(k)}$, without loss of generality let $j=l_k$, then we can introduce a diffeomorphism $\tilde\Psi_{k+1}$ sending
$$x_{r+k+1}\mapsto y_{r+k+1}=F_{k+1}(x_{r+1}-x^{(k+1)}_{r+1},\dots,x_m-x^{(k+1)}_m)-c^{(k+1)}_{k+1},$$
$$x_{r+j}\mapsto y_{r+j}=x_{r+j}-x^{(k+1)}_{r+j},\quad j\neq k,$$
where $c^{(k+1)}_{k+1}=F_{k+1}(x^{(k+1)}_{r+1},\dots,x^{(k+1)}_m)$. Then if we define $\Psi_{k+1}:=\tilde\Psi_{k+1}\circ\Psi_k$, we have $(F_{k+1}\circ\Psi_{k+1})(y_{r+1},\dots,y_m)=c^{(k+1)}_{k+1}+y_{k+1}$. The form of $F_i\circ\Psi_{k+1}$ for $i<k+1$ remains the same, only the constants $c^{(k)}_i$ change to new $c^{(k+1)}_i$ because of the translation.
\end{proof}

The functions $\vec c+\vec f$ in Lemma \ref{L.nontrans} were constructed in such a way that they are mutually inequivalent with respect to change of coordinates, so the set of realizations constructed by this lemma contains mutually Inn-inequivalent realizations.% Moreover, small translations in the variables $x_1,\dots,x_r$ cause only a transformation of the whole realization by some inner automorphism. Small translations in the variables $x_{r+1},\dots,x_m$ do either nothing (if $R$ does not depend on those variables) or essentially only changes those parameters $c_j$ for which $f^j(x_{r+1},\dots,x_m)=x_{r+l_j}$. So, we have the following proposition.
%
%\begin{proposition}
%\label{P.dense}
%The system of regular local realizations described by Lemma \ref{L.nontrans} contain mutually inequivalent realizations with respect to inner automorphisms. All sufficient subsystems of this system with respect to inner automorphisms are obtained by restricting those parameters $c_j$ such that $f^j(x_{r+1},\dots,x_m)=x_{r+l_j}$ to a dense subset of their domain of definition.
%\end{proposition}

Finally, we can summarize the algorithm for construction of complete system of realizations into the following theorem.

\begin{theorem}
\label{T.main}
Let
$$\bar\SS_m=\bigcup_i\bar S_m^i(D_i),\quad D_i\subset\R^{s_i}$$
be a proper parametrization of classes of $\g$ subalgebras with codimension $m$. Let $\SS_m=\bigcup_i S_m^i(D_i)$ be such set of their representatives that $S_m^i$ are smooth. Let $\TT_m=\bigcup_i R_m^{i\,(D_i)}$ be the corresponding local transitive realizations. Let $\RR_m^i$ be the sets of regular realizations constructed as in Lemma \ref{L.nontrans} from $s_i$-parameter sets $R_m^{i\,(D_i)}$ of transitive realizations. Then the system of local realizations $\RR=\bigcup_m\left(\TT_m\cup\bigcup_i\RR_m^i\right)$ is complete with respect to inner automorphisms and contains mutually Inn-inequivalent local realizations.% and is ``as minimal as possible'' in the following sense. Every subsystem of $\RR$ that is complete with respect to a strong equivalence is obtained by restricting certain parameters of realizations to a dense subset of their domain (in sense of Proposition \ref{P.dense}).
\end{theorem}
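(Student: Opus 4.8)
The plan is to prove the two assertions—completeness and mutual Inn-inequivalence—separately. Completeness follows by chaining the sufficiency results of the previous sections, while the inequivalence requires some bookkeeping of invariants.

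For completeness I would begin from the system $\bar\RR_{\rm reg}$ of all regular local realizations, which is complete: by Lemma \ref{L.regdense} every global realization possesses a regular point, so its germ there is a regular local realization. I would then exhibit $\RR$ as a sufficient subsystem of $\bar\RR_{\rm reg}$ through an intermediate system $\bar\RR_2$, applying Lemma \ref{L.equivchar} twice. First, fix a codimension $m$ and use the given proper parametrization $\bar\SS_m=\bigcup_i\bar S_m^i(D_i)$: Proposition \ref{P.parametr} shows that the realizations of the form \eqref{eq.Treg2} attached to the maps $S_m^i\circ F$, together with the transitive realizations $\TT_m$ covering the full-rank case, form a sufficient subsystem of all regular realizations of rank $m$. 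Taking the union over all $m$ yields a sufficient subsystem $\bar\RR_2$ of $\bar\RR_{\rm reg}$, hence $\bar\RR_2$ is complete by Lemma \ref{L.equivchar}. Second, for each pair $(m,i)$ the map $S_m^i$ is an embedding and therefore an immersion, so Lemma \ref{L.nontrans} applies and shows that $\RR_m^i$ (the realizations of the normalized form \eqref{eq.nontranfin}) is a sufficient subsystem of the form-\eqref{eq.Treg2} realizations attached to $S_m^i$. Unioning over $(m,i)$ and keeping the transitive pieces $\TT_m$ already present, the system $\RR=\bigcup_m(\TT_m\cup\bigcup_i\RR_m^i)$ is a sufficient subsystem of $\bar\RR_2$, so a second application of Lemma \ref{L.equivchar} gives completeness of $\RR$.

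For the mutual inequivalence I would separate candidate pairs by invariants. Both the rank at the base point and the dimension of the realizing manifold are preserved under diffeomorphism and automorphism, so two members of $\RR$ of different rank, or of equal rank but living on different numbers of variables—in particular any $\TT_m$, defined on $m$ variables, against any genuine extension in $\RR_m^i$, defined on more—are immediately inequivalent. Within one $\TT_m$, distinct parameters give distinct subalgebra classes because $S_m^i$ is injective and the pieces $\bar S_m^i(D_i)$ are disjoint, and the subalgebra--transitive-realization correspondence of Section \ref{sec.trans} then separates them. The remaining case is two equal-rank realizations inside $\bigcup_i\RR_m^i$; here I would invoke Theorem \ref{T.reg}, by which Inn-equivalence forces the associated $\bar\SS_m$-valued maps $S_m^i\circ F$ to agree up to a reparametrization $\Psi$. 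Evaluating such an identity at a point lands in $\bar S_m^i(D_i)\cap\bar S_m^{i'}(D_{i'})$, empty for $i\ne i'$, which excludes cross-piece equivalences; and within a single piece, the embedding property of $S_m^i$ reduces the identity to $\vec c+\vec f=(\vec c'+\vec f')\circ\Psi$, which is ruled out because the normalized data of Lemma \ref{L.nontrans} were arranged to be pairwise inequivalent under coordinate changes.

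The completeness chain is essentially routine bookkeeping over the two sufficiency statements. The step I expect to be the main obstacle is the inequivalence claim, since it requires juggling several invariants at once and checking that every pair of distinct members of $\RR$ is separated by at least one of them; in particular, the reduction via Theorem \ref{T.reg} of Inn-equivalence of regular realizations to reparametrization-equivalence of their subalgebra-class maps, correctly combined with the embedding property of $S_m^i$ and the normalization from Lemma \ref{L.nontrans}, is what needs the most care.
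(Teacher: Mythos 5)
Your proposal is correct and follows essentially the same route as the paper, which states Theorem \ref{T.main} without a printed proof precisely because it is the assembly of Lemma \ref{L.regdense}, Lemma \ref{L.equivchar}, Proposition \ref{P.parametr}, Lemma \ref{L.nontrans} and the remark following that lemma — exactly the chain you make explicit. Your more detailed treatment of the mutual Inn-inequivalence (separating by rank and number of variables, then reducing via Theorem \ref{T.reg} to reparametrization-equivalence of the subalgebra-class maps and using injectivity and disjointness of the proper parametrization together with the normalization of Lemma \ref{L.nontrans}) is a faithful elaboration of the paper's one-line appeal to that normalization.
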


\begin{remark}
\label{R.completestrong}
If we are doing classification with respect to strong equivalence, all we have to do in the end is to apply the inner automorphisms to the complete system we have found. This essentially mean putting back the parameters we have eliminated when doing the classification with respect to inner automorphisms.
\end{remark}

\begin{remark}
\label{R.completeAut}
If we are doing classification with respect to all automorphisms, we make use of Remark \ref{R.outer}. It follows that if we have a parametrization of Inn-classes of subalgebras $S(a_1,\dots,a_s)$ in such a way that for first $t$ parameters parametrize Aut-equivalent subalgebras, that is, subalgebras $S(a_1,\dots,a_t,a_{t+1},\dots,a_s)$ and $S(\tilde a_1,\dots,\tilde a_t,a^{t+1},\dots,a_s)$ are always Aut-equivalent, then realizations constructed in Lemma \ref{L.nontrans} that differ only in constants $c_1,\dots,c_t$ are also $A$-equivalent.
\end{remark}

Finally, we are going to illustrate the presented algorithm for construction of complete system of realizations, on a more complicated example. We illustrate classification with respect to strong, Inn-, and Aut-equivalnce.

\begin{example}
Let us consider a Lie algebra $\g=\g_{2.1}\oplus 2\g_1=\Span\{e_1,e_2,e_3,e_4\}$, $[e_1,e_2]=e_1$. We are going to find a complete system of realizations with rank three. The groups of automorphisms expressed in the basis $(e_1,e_2,e_3,e_4)$ are following
\begin{eqnarray}
\Inn\g&=&\left\{\begin{pmatrix}\tilde t_1&t_2&0&0\\0&1&0&0\\0&0&1&0\\0&0&0&1\end{pmatrix}\mathrel{\Bigg|}\tilde t_1\in\R^+,t_2\in\R\right\}
\label{eq.exin}\\
\Aut\g&=&\left\{\begin{pmatrix}t_1&t_2&0&0\\0&1&0&0\\0&t_3&t_5&t_6\\0&t_4&t_7&t_8\end{pmatrix}\mathrel{\Bigg|}t_1,t_2,t_3,t_4,t_5,t_6,t_7,t_8\in\R,\;t_1\neq 0,\;t_5t_8-t_6t_7\neq 0\right\}
\label{eq.exaut}
\end{eqnarray}

\begin{table}[t]
\begin{center}
\begin{tabular}{lllll}
\hline
                &  Subalgebras             &  Inn-classes             &  Aut-classes          &  Is ideal?\\\hline
$\h_1$          &  $e_1$                   &  $e_1$                   &  $e_1$                &  Yes.\\
$\h_2^c$        &  $e_3+ce_1$              &  $e_3+\sgn(c)e_1$        &  $e_4+\eta_ce_1$      &  If $c=0$.\\
$\h_3^{a,c}$    &  $e_4+ae_3+ce_1$         &  $e_4+ae_3+\sgn(c)e_1$   &  $e_4+\eta_ce_1$      &  If $c=0$.\\
$\h_4^{a,b,c}$  &  $e_2+ae_3+be_4+ce_1$    &  $e_2+ae_3+be_4$         &  $e_2$  &  No.\\\hline
\end{tabular}
\end{center}
\caption{One-dimensional subalgebras of $\g$}
\label{t.subalgebras}
\end{table}

\begin{table}[t]
\begin{center}
\begin{tabular}{llll}
\hline
                &  Subalgebras             &  Realizations                                          &  Is faithful?\\\hline
$\h_1$          &  $e_1$                   &  $0$, $\D_1$, $\D_2$, $\D_3$                           &  No.\\
$\h_2^c$        &  $e_3+ce_1$              &  $\D_1$, $x_1\D_1+\D_2$, $-c\e^{x_2}\D_1$, $\D_3$       &  If $c\neq 0$.\\
$\h_3^{a,c}$    &  $e_4+ae_3+ce_1$         &  $\D_1$, $x_1\D_1+\D_2$, $\D_3$, $-c\e^{x_2}\D_1-a\D_3$  &  If $c\neq 0$.\\
$\h_4^{a,b,c}$  &  $e_2+ae_3+be_4+ce_1$    &  $\D_1$, $(x_1-c)\D_1-a\D_2-b\D_3$, $\D_2$, $\D_3$     &  Yes.\\\hline
\end{tabular}
\end{center}
\caption{Classification of local transitive realizations of $\g$ in three variables}
\label{t.trans}
\end{table}

Of course, all one-dimensional subspaces are also one-dimensional subalgebras. In Table \ref{t.subalgebras}, we classify the one-dimensional subalgebras with respect to automorphisms. In the first column, parametrization of all subalgebras is presented. In the second column we choose an Inn-representative for each subalgebra. In the third column we choose the Aut-representatives. We denote $\eta_a=\sgn|a|$. In the last column we express a condition for the subalgebra to be an ideal.

The Shirokov's computation leads to classification of transitive realizations with respect to strong equivalence listed in Table \ref{t.trans}. In the second column, a realization corresponding to subalgebra in the first column is presented. The entry consists of images of the basis elements $e_1$, $e_2$, $e_3$, and $e_4$. In the last column, the negation of last column of Table \ref{t.subalgebras} expresses a condition for the realization to be faithful.

The classification with respect to inner or all automorphisms is obtained easily by substituting the parameters of subalgebras by parameters of the chosen representatives. By doing so in case of inner automorphisms and applying Lemma \ref{L.nontrans}, we get a complete system of realizations of $\g$ with respect to inner automorphisms and list it in Table \ref{t.inn}. The function $f\colon\R\to\R$ is an arbitrary local function satisfying $f(0)=0$. All realizations are Inn-inequivalent.

\begin{table}[ht]
\begin{center}
\begin{tabular}{ll}
\hline
Subalgebra                 &  Realization                                        \\\hline
$\h_1$                     &  $0$, $\D_1$, $\D_2$, $\D_3$                        \\
$\h_2^0$                   &  $\D_1$, $x_1\D_1+\D_2$, $0$, $\D_3$                 \\
$\h_2^1$                   &  $\D_1$, $x_1\D_1+\D_2$, $-\e^{x_2}\D_1$, $\D_3$     \\
$\h_2^{-1}$                &  $\D_1$, $x_1\D_1+\D_2$, $\e^{x_2}\D_1$, $\D_3$      \\
$\h_3^{a,0}$               &  $\D_1$, $x_1\D_1+\D_2$, $\D_3$, $-a\D_3$             \\
$\h_3^{a+x_4,0}$           &  $\D_1$, $x_1\D_1+\D_2$, $\D_3$, $-(a+x_4)\D_3$             \\
$\h_3^{a,1}$               &  $\D_1$, $x_1\D_1+\D_2$, $\D_3$, $-\e^{x_2}\D_1-a\D_3$\\
$\h_3^{a+x_4,1}$           &  $\D_1$, $x_1\D_1+\D_2$, $\D_3$, $-\e^{x_2}\D_1-(a+x_4)\D_3$\\
$\h_3^{a,-1}$              &  $\D_1$, $x_1\D_1+\D_2$, $\D_3$, $\e^{x_2}\D_1-a\D_3$ \\
$\h_3^{a+x_4,-1}$          &  $\D_1$, $x_1\D_1+\D_2$, $\D_3$, $\e^{x_2}\D_1-(a+x_4)\D_3$ \\
$\h_4^{a,b,0}$             &  $\D_1$, $x_1\D_1-a\D_2-b\D_3$, $\D_2$, $\D_3$      \\
$\h_4^{a,b+x_4,0}$         &  $\D_1$, $x_1\D_1-a\D_2-(b+x_4)\D_3$, $\D_2$, $\D_3$      \\
$\h_4^{a+x_4,b+f(x_4),0}$  &  $\D_1$, $x_1\D_1-(a+x_4)\D_2-(b+f(x_4))\D_3$, $\D_2$, $\D_3$      \\
$\h_4^{a+x_4,b+x_5,0}$     &  $\D_1$, $x_1\D_1-(a+x_4)\D_2-(b+x_5)\D_3$, $\D_2$, $\D_3$      \\\hline
\end{tabular}
\end{center}
\caption{Complete system of realizations of $\g$ with rank three with respect to inner automorphisms}
\label{t.inn}
\end{table}

Now, let us make the classification with respect to strong equivalence. As we mentioned in Remark \ref{R.completestrong}, we just have to put back the eliminated parameters. That is, instead of listing realizations corresponding, for example, to $\h_4^{a,b+x_4,0}$, we list realizations corresponding to all $\h_4^{a,b+x_4,c}$. The result is in Table \ref{t.strong}. All realizations are strongly inequivalent.

\begin{table}[ht]
\begin{center}
\begin{tabular}{ll}
\hline
Subalgebra                 &  Realization                                        \\\hline
$\h_1$                     &  $0$, $\D_1$, $\D_2$, $\D_3$                        \\
$\h_2^c$                   &  $\D_1$, $x_1\D_1+\D_2$, $-c\e^{x_2}\D_1$, $\D_3$     \\
$\h_3^{a,c}$               &  $\D_1$, $x_1\D_1+\D_2$, $\D_3$, $-c\e^{x_2}\D_1-a\D_3$\\
$\h_3^{a+x_4,c}$           &  $\D_1$, $x_1\D_1+\D_2$, $\D_3$, $-c\e^{x_2}\D_1-(a+x_4)\D_3$\\
$\h_4^{a,b,c}$             &  $\D_1$, $(x_1-c)\D_1-a\D_2-b\D_3$, $\D_2$, $\D_3$      \\
$\h_4^{a,b+x_4,c}$         &  $\D_1$, $(x_1-c)\D_1-a\D_2-(b+x_4)\D_3$, $\D_2$, $\D_3$      \\
$\h_4^{a+x_4,b+f(x_4),c}$  &  $\D_1$, $(x_1-c)\D_1-(a+x_4)\D_2-(b+f(x_4))\D_3$, $\D_2$, $\D_3$      \\
$\h_4^{a+x_4,b+x_5,c}$     &  $\D_1$, $(x_1-c)\D_1-(a+x_4)\D_2-(b+x_5)\D_3$, $\D_2$, $\D_3$      \\\hline
\end{tabular}
\end{center}
\caption{Complete system of realizations of $\g$ with rank three with respect to strong equivalence
}
\label{t.strong}
\end{table}

Finally, let us do the classification with respect to all automorphisms. As we mentioned in Remark \ref{R.completeAut}, the first step is to remove the unneccessary parameters. For example, all subalgebras $\h_3^{a,c}$ for $c\neq 0$ are Aut-equivalent to $\h_3^{0,1}$. Therefore, realizations corresponding to $\h_3^{a,c}$ and $\h_3^{a+x_4,c}$ are equivalent to the realizations corresponding to $\h_3^{0,1}$ and $\h_3^{x_4,1}$. The result of such a process is listed in Table \ref{t.aut}.  In the last column, we list the number of the realization in the classification \cite{Popovych2003}, p.~7345 (only in case it is faithful).

\begin{table}[ht]
\begin{center}
\begin{tabular}{lll}
\hline
Subalgebra                 &  Realization                                              &  No. in \cite{Popovych2003}\\\hline
$\h_1$                     &  $0$, $\D_1$, $\D_2$, $\D_3$                              &    \\
$\h_3^{0,0}$               &  $\D_1$, $x_1\D_1+\D_2$, $\D_3$, $0$                      &    \\
$\h_3^{x_4,0}$             &  $\D_1$, $x_1\D_1+\D_2$, $\D_3$, $-x_4\D_3$               &  7  \\
$\h_3^{0,1}$               &  $\D_1$, $x_1\D_1+\D_2$, $\D_3$, $-\e^{x_2}\D_1$          &  6  \\
$\h_3^{x_4,1}$             &  $\D_1$, $x_1\D_1+\D_2$, $\D_3$, $-\e^{x_2}\D_1-x_4\D_3$  &  5  \\
$\h_4^{0,0,0}$             &  $\D_1$, $x_1\D_1$, $\D_2$, $\D_3$                        &  4  \\
$\h_4^{0,x_4,0}$           &  $\D_1$, $x_1\D_1-x_4\D_3$, $\D_2$, $\D_3$                &  3  \\
$\h_4^{x_4,f(x_4),0}$      &  $\D_1$, $x_1\D_1-x_4\D_2-f(x_4)\D_3$, $\D_2$, $\D_3$     &  3  \\
$\h_4^{x_4,x_5,0}$         &  $\D_1$, $x_1\D_1-x_4\D_2-x_5\D_3$, $\D_2$, $\D_3$        &  2  \\\hline
\end{tabular}
\end{center}
\caption{Complete system of realizations of $\g$ with rank three with respect to all automorphisms}
\label{t.aut}
\end{table}

The realization corresponding to $\h_4^{0,x_4,0}$ is Aut-equivalent to realization $\h_4^{x_4,0,0}$. Furthermore, realizations corresponding to $\h_4^{x_4,f(x_4),0}$ may be, for different $f$, also equivalent. All other realizations listed in \ref{t.aut} are Aut-inequivalent.

To find the condition for the realizations corresponding to $\h_4^{x_4,f(x_4),0}$ to be equivalent, we have to examine, how automorphisms act on this subalgebra-valued function. The equation
$$\alpha_{t_1,t_2,t_3,t_4,t_5,t_6,t_7,t_8}(\h_4^{x_4,f(x_4),0})=\h_4^{\tilde x_4,\tilde f(\tilde x_4),0},$$
where $\alpha$ is an automorphism of $\g$ parametrized as in \eqref{eq.exaut}, together with conditions $\tilde x_4(x_4=0)=0$ and $f(0)=0$, leads to constraint $t_2,t_3,t_4=0$ and finally is equivalent to
$$\tilde x_4=t_5x_4+t_6f(x_4),\quad\tilde f(\tilde x_4)=t_7x_4+t_8f(x_4),$$
Therefore, realizations corresponding to $\h_4^{x_4,f(x_4),0}$ and $\h_4^{x_4,\tilde f(x_4),0}$, where $f(0)=\tilde f(0)=0$, are equivalent if and only if
$$\tilde f(t_5x+t_6f(x))=t_7x+t_8f(x),$$
where $t_5t_8-t_6t_7\neq 0$ (cf. \cite{Popovych2003} p.~7351).
\end{example}

\subsection*{Acknowledgments}

The authors acknowledge support of the Czech Technical University in Prague, namely the projects SGS15/215/OHK4/3T/14 and SGS16/239/OHK4/3T/14. We thank M. Nesterenko for valuable discussions and comments.


\begin{thebibliography}{10}

\bibitem{Blattner1969}
R. J. Blattner.
\newblock Induced and Produced Representations of Lie Algebras.
\newblock \textit{Trans. Amer. Math. Soc.}, 144, 457--474, 1969.

%\bibitem{burdet1975}
%G. Burdet, M.  Perrin.
%\newblock Realizations of the central extension of the inhomogeneous symplectic algebra as time dependent invariance algebras of nonrelativistic quantum systems.
%\newblock {\em J.~Math. Phys.}, 16 (8): 1692--1703, 1975.


%\bibitem{burdik1988}
%\v C. Burd\'ik.
%\newblock A new class of realisations of the Lie algebra so$(q, 2n - q)$.
%\newblock {\em J. Phys. A: Math. Gen.}, 21: 284-295, 1988.


\bibitem{draisma2012}
J. Draisma.
\newblock Transitive Lie Algebras of Vector Fields: An Overview.
\newblock {\em Qual. Theory Dyn. Syst.}, 11: 39--60, 2012.

\bibitem{draisma2002}
Draisma, J.
On a conjecture of Sophus Lie.
In: \textit{Differential Equations and the Stokes Phenomenon.}
%Proceedings of a Workshop held at Groningen University from May 28–30, 2001.
World Scientific, Singapore, 65--87, 2002

\bibitem{Guillemin1964}
V. W. Guillemin, and S. Sternberg.
\newblock An algebraic model of transitive differential geometry.
\newblock \textit{Bull. Amer. Math. Soc.}, 70, 16–47, 1964.

\bibitem{Jafarpour}
S. Jafarpour, and A. Lewis.
\textit{Time-Varying Vector Fields and Their Flows}.
Springer International Publishing, 2014

\bibitem{lie1893}
S. Lie, and F. Engel
\textit{Theorie der Transformationsgruppen}, Vol.1--3.
Leipzig, 1888, 1890, 1893.

\bibitem{shirokov2013}
A.~A. Magazev, V.~V. Mikheyev, and I.~V. Shirokov.
\newblock Computation of composition functions and invariant vector fields in
  terms of structure constants of associated {L}ie algebras.
\newblock {\em SIGMA}, 11:066, 17 pages, 2015.

\bibitem{Munkers}
James R. Munkers.
\textit{Topology.}
Upper Saddle River: Prentice-Hall, 2000.

\bibitem{Nesterenko2015}
M.~Nesterenko.
\newblock The {P}oincar\'e algebras $p(1,1)$ and $p(1,2)$: realizations and
  deformations.
\newblock {\em J.~Phys. Conf. Ser.}, 621:012009, 2015.

\bibitem{Nesterenko2016}
M. Nesterenko, S. Po\v sta, and O. Vaneeva
\newblock Realizations of Galilei algebras
\newblock {\em J.~Phys.~A: Math. Theor.},  49(11), 115203, 26 pages, 2016.

%\bibitem{olver1994}
%Olver P. Differential invariants and invariant differential equations,
%{\it Lie Groups and Appl.}, 1994, V.1, 177--192.


\bibitem{Popovych2003}
R.~O. Popovych, V.~M. Boyko, M.~O. Nesterenko, and M.~W. Lutfullin.
\newblock Realizations of real low-dimensional {L}ie algebras.
\newblock {\em J.~Phys.~A}, 36(26):7337--7360, 2003.
\newblock See math-ph/0301029 for the revised and extended version.

\bibitem{Spichak}
S.~V. Spichak. Preliminary Classification of Realizations of Two-Dimensional Lie Algebras of Vector Fields on a Circle.
In: \textit{Sixth Workshop Group Analysis of Differential Equations and Integrable Systems},
%Nicosia: Department of Mathematics and Statistics, University of Cyprus,
212--218, 2013

\bibitem{shirokov1997}
I.~V. Shirokov.
\newblock Construction of {L}ie algebras of first-order differential operators.
\newblock {\em Izv. Vyssh. Uchebn. Zaved. Fiz.}, 40(6):25--32, 1997.


%%%%%%%%%%%%%%%%%%%%%%%%%%%%%%%%%%%%%%%%%%%%%%%%%%%%%%%%%%%%%%%%%%%%%%%%%%%%%%%%%%%%%%%%%%%%%%5

\bibitem{basarab-horwath&lahno&zhdanov2001}
Basarab-Horwath P., Lahno V. and Zhdanov R.,
The structure of Lie algebras and the classification problem for partial differential equations,
{\it Acta Appl. Math.}, 2001, V.69, N~1, 43--94,  math-ph/0005013.

\bibitem{BourliouxCyr-GagnonWinternitz}
Bourlioux A., Cyr-Gagnon C. and Winternitz P.,
Difference schemes with point symmetries and their numerical tests,
\textit{J. Phys. A: Math. Gen.}, 2006, V.39, 6877--6896, math-ph/0602057.

\bibitem{Carinena}
Carinena J.F., Grabowski J. and Marmo G.,
Some physical applications of systems of differential equations admitting a superposition rule,
{\it Rep. Math. Phys.}, 2001, V.48, 1–2, 47--58.

\bibitem{Levine}
Levine R.D.,
Lie algebraic approach to molecular structure and
dynamics, in Mathematical Frontiers in Computational Chemical Physics,
ed. D.G. Truhlar, IMA Volumes in Mathematics and its Applications,
Vol.15, New York, Springer-Verlag, 1988, 245--261.

\bibitem{Lie1}
Lie~S.,
\"Uber Differentiation, {\it Math.~Ann.}, 1884, V.24, 537--578;  Gesammetle Abhandlungen, Vol.\ 6,
Leipzig, B.G. Teubner, 1927, 95--138.

\bibitem {lie1883} Lie S., Klassification und Integration von gew\"ohnlichen
Differentialgleichungen zwischen $x$, $y$,
die eine Gruppe von Transformationen gestatten, {\it Arch. Math. Naturv.},
1883, V.9, 371--393.

\bibitem{lie1888_1890_1893} Lie S., Theorie der Transformationsgruppen, Vol.1--3,
Leipzig, B.G. Teubner, 1888, 1890, 1893 (see pages 57--73, 713--732 of the Vol.3).


\bibitem{petrov1966}  Petrov A.Z.,
New methods in general relativity,
Moscow, Nauka, 1966 (in Russian).

\bibitem{Popovych&Boyko}
Popovych R. and Boyko V.,
Differential invariants and application to Riccati-type systems,
{\it Proceedings of Institute of Mathematics},
Kyiv, 2002, V.43, Part 1, 184--193, math-ph/0112057.


\bibitem{Olver0}
Olver P.J.,
Applications of Lie groups to differential equations, New York, Springer-Verlag, 1993.

\bibitem{Olver1}
Olver P.J.,
Differential invariants and invariant differential equations,
{\it Lie Groups Appl.}, 1994, V.1, 177--192.

\bibitem{Olver2}
Olver P.J.,
Equivalence, invariants, and symmetry, Cambridge University Press, 1995.

\bibitem{Ovsyannikov}
Ovsiannikov~L.V.,
Group analysis of differential equations,
New York, Academic Press, 1982.

\bibitem{PateraSharpWinternitz1976}
Patera J., Sharp R.T., Winternitz P. and Zassenhaus H.,
Invariants of real low dimension Lie algebras,
\textit{J. Math. Phys.}, 1976, V.17, 986--994,

\bibitem{Shnider_Winternitz1984}
Shnider S. and Winternitz P.,
Nonlinear equations with superposition principles and the theory of transitive primitive Lie algebras,
{\it Lett. Math. Phys.}, 1984, V.8, 69--78.

\end{thebibliography}
\end{document}